\documentclass[journal]{IEEEtran}

\usepackage{amsmath,amssymb,amsfonts,amsthm}
\usepackage{graphicx,xcolor}
\usepackage{booktabs}
\usepackage{cite}
\usepackage{mathtools}
\usepackage{url}
\usepackage{titlesec}
\titlespacing{\section}{0pt}{1ex plus 0.2ex minus 0.2ex}{0.5ex plus 0.1ex}
\titlespacing{\subsection}{0pt}{0.8ex plus 0.2ex minus 0.2ex}{0.3ex plus 0.1ex}
\newtheorem{proposition}{Proposition}

\newtheorem{remark}{Remark}

\newcommand{\CRB}{\mathrm{CRB}}
\newcommand{\defn}{\mathrel{\mathop:}=}

\begin{document}

\title{Detection-Aware Rate--CRB Characterization for Resource Allocation in ISAC Systems}

\author{Sana~Mazahir,~\IEEEmembership{Member,~IEEE,}
and Hina~Tabassum,~\IEEEmembership{Senior Member,~IEEE}
\thanks{The authors are with the Department of Electrical Engineering
and Computer Science, York University, Canada (e-mail:
sana2025@yorku.ca; hinat@yorku.ca).}}

\maketitle

\begin{abstract}
Integrated sensing and communication (ISAC) performance analysis commonly employs the Cram\'er--Rao bound (CRB) as the sensing metric. The CRB lower-bounds estimation-error variance under a target-present model but does not characterize the reliability of the energy detection that receivers often use to validate returns against noise and clutter. Considering the range CRB in a single-user, single-target joint power--bandwidth allocation problem, we establish the conditions under which a rate-optimal allocation at a given CRB requirement fails to achieve a given detection probability. We then develop detection-constrained rate--CRB characterizations at two levels: 1) a detection-aware (DA) formulation that treats the CRB and detection probability as parallel requirements, and 2) a detect--then--estimate (DTE) formulation using a \textit{detection-gated CRB} defined on frames that cross an energy threshold. We analyze the resulting changes in the optimal rate and feasible power--bandwidth allocations, and for DTE further show how detection probability and conditional estimation accuracy jointly determine downstream tracking performance. Multi-user, multi-target resource allocation is then formulated under CRB-only, DA, and DTE characterizations. The results show that CRB-only allocation can conceal inadequate detection reliability, and that accounting for detection and receiver architecture materially changes the power--bandwidth allocation and the communication--sensing tradeoff.
\end{abstract}

\begin{IEEEkeywords}
ISAC, target detection, parameter estimation, Cram\'er--Rao bound, resource allocation.
\end{IEEEkeywords}

\section{Introduction}
Integrated sensing and communication (ISAC) is expected to be a native capability of next-generation wireless networks, enabling communication and sensing to share hardware, waveforms, and time, frequency, and spatial resources. This shared-resource operation has motivated extensive ISAC beamforming and
resource-allocation research~\cite{lu2023ten,liu2022joint}, commonly characterized through paired communication and sensing metrics. To date, the pairing of achievable rate with the Cram\'er--Rao bound (CRB) is the most widely adopted characterization~\cite{liu2021optimaltx,dong2022sensing,liu2022joint,
hua2024crbrate,ren2024fundamental,xiong2023fundamental,
jointAngleDelay2023,li2024uplink,song2026crb,mazahir2025dfrc}.

The CRB lower-bounds the estimation error variance under a
target-present assumption \cite{richards2014radar}. This means that once an initial decision is made that a target is present in a particular delay-Doppler-angle cell, the estimator is assumed to deliver a parameter value on every subsequent sensing frame, including those with weaker target returns. In practice, however, sensing receivers couple parameter estimation with an energy detection mechanism to ensure that measurements are only accepted from validated target echoes. This coupling is realized in different ways: detect-then-estimate (DTE) invokes estimation only after validating the current-frame echo using an energy detector~\cite{richards2014radar,chaumette2005influence,
chaumette2007conditional}; Bayesian joint detection and
estimation couples the two stages through a decision-dependent cost~\cite{moustakides2012joint,tajer2010optimal}; a generalized likelihood ratio test forms the detection statistic by substituting maximum-likelihood estimates of the unknown parameters into the likelihood ratio~\cite{vantrees1971detection3,gaudio2020joint}; and track-before-detect avoids a framewise detection decision by integrating measurements along candidate target trajectories over multiple frames before declaring
a track~\cite{tonissen1996}. Tracking performance is likewise affected \cite{sinopoli2004kalman}, since the tracker updates on successful measurements rather than on the assumption that an estimate is always available.

If an allocation under a prescribed CRB yields an insufficient
signal-to-noise ratio (SNR) for reliable detection, the
CRB can overstate the delivered sensing performance. This
is also consistent with the well-known fact that the CRB
is tight only at a sufficiently high SNR \cite{otfs_isac_zzb_tvt}.
Characterizing the hierarchical detection--estimation--tracking operation then requires an explicit detection requirement and, for a gated receiver model, conditioning the estimation metric on detection. This is especially important when the shared resources are scarce, as the allocation is then more likely to push the sensing performance into a detection-limited regime.

\subsection{Related Work and Research Gap}
Existing ISAC studies commonly follow three metric-selection approaches.
The first employs information-theoretic metrics for both functions, such as communication rate and sensing mutual information or sensing rate~\cite{kobayashi2018joint,ouyang2022integrated}. The second pairs a communication metric, most commonly the achievable rate, with a task-specific sensing metric: detection probability~\cite{hua2024integrated,dong2022sensing,li2026distributed}, the CRB for
estimation~\cite{liu2021optimaltx,dong2022sensing,liu2022joint,
hua2024crbrate,ren2024fundamental,xiong2023fundamental,
jointAngleDelay2023,li2024uplink,song2026crb}, tracking
error~\cite{wang2024resource,dong2022sensing,li2026distributed}, or ambiguity-related quantities such as delay--Doppler resolution and peak sidelobe level~\cite{li2025sensing}. The third uses signal-quality measures, most commonly communication SNR and sensing SNR~\cite{liu2022isacbeam,yu2023activeris,li2022novel,fan2024power}. Other works define ISAC-specific metrics such as value of service~\cite{li2024maximizing} and integration
gain~\cite{11161833}.

{To the best of our knowledge, current rate--CRB
optimizations do not explicitly account for the detection reliability
required for parameter estimates to become available}~
\cite{liu2022joint,ren2024fundamental,hua2024crbrate,
dong2022sensing,xiong2023fundamental,jointAngleDelay2023,
song2026crb,rsmaNF2025,liu2021optimaltx,li2024uplink}.
Detection and estimation are typically treated as separate sensing tasks, with a formulation selecting one metric and leaving the other unconstrained. Detection is typically interpreted as a one-shot decision establishing target presence in the search mode. Even works that address multiple sensing tasks in ISAC do not model the interdependence among them. The problems in
\cite{dong2022sensing} optimize detection probability, CRB, and tracking posterior CRB (PCRB) in separate task-specific formulations. The optimization in \cite{li2026distributed} detects unconnected vehicles and tracks a disjoint set of connected vehicles. Neither problem models how the detection outcome on a given frame affects the availability of parameter estimates and the tracking update for the same target.
\subsection{Contributions}
This gap raises the following questions: \emph{(i) When does a rate--CRB-optimal allocation also satisfy a given energy-detection probability level? (ii) When it does not, how does explicitly enforcing detection alter the rate-optimal resource allocation, and what operating regimes emerge? (iii) How does detection gating affect the accuracy of delivered estimates, downstream tracking, and achievable communication rate?}

To answer these questions, we study a joint power--bandwidth
allocation problem with the three metrics of interest: achievable rate, detection probability, and range CRB, under three formulations: (1) detection-blind (DB) rate--CRB, similar in nature to existing rate--CRB formulations in that it uses the nominal range CRB\footnote{Throughout this paper, the \emph{nominal} CRB refers to the standard CRB derived under the target-present hypothesis. The \emph{detection-gated} CRB, formally derived in
Section~\ref{sec:dte}, is the CRB conditioned on
successful energy detector output under the DTE receiver signal processing model.} alone as the sensing metric, but examined here from a \emph{detection-feasibility viewpoint};
(2) \emph{detection-aware (DA)} rate--CRB, a minimal augmentation of the DB formulation that enforces an energy-detection probability requirement; and (3) DTE rate--CRB for a \emph{DTE sensing receiver model with detection-gated CRB}. Thus, the DB--DA comparison isolates the resource-allocation effect of enforcing a detection requirement, whereas the DA--DTE comparison isolates the statistical effect of detector gating. The dependence of the range CRB on both power and bandwidth makes the comparison across these formulations, and the underlying mechanisms, particularly transparent, as detailed in Section~\ref{sec:db-da}. 

Our main contributions are summarized as follows.

\begin{itemize}
\item We identify a resource-composition ambiguity in DB
rate--CRB optimization: allocations with the same nominal range CRB can provide substantially different energy-detection probabilities. We derive the rate-optimal DB power--bandwidth allocation and establish a closed-form CRB threshold that determines when this operating point is also detection-feasible.
\item We formulate the DA rate--CRB problem by imposing a
prescribed detection probability while retaining the nominal CRB constraint. This formulation corresponds to assuming that both the detector and the estimator operate on every frame, without specifying how the receiver couples their outputs. The analysis identifies three operating regimes and characterizes how enforcing detection changes the rate-optimal power--bandwidth allocation and  the communication rate.

\item For a DTE receiver, building on the general
detection-conditioned lower-bound framework in \cite{chaumette2005influence} and its energy-detector specialization in \cite{chaumette2007conditional}, we derive a closed-form \emph{detection-gated range CRB} under our adopted ISAC signal model. We characterize its power--bandwidth consequences and validate the resulting selection factor through Monte Carlo simulations. We further show how detection probability and the accuracy of accepted measurements jointly determine the tracking PCRB. We further analyze its impact on resource allocation and achievable communication rate.

\item {We formulate multi-user, multi-target max--min
rate optimization problems with joint power and bandwidth allocation
under DB, DA, and DTE sensing constraints and establish their convexity.}
Numerical results show that CRB-only allocations can leave weaker targets
detection-infeasible, enforcing detection in DA incurs a communication-rate
cost, and {DTE can recover} some cost by requiring
fewer sensing resources to meet the conditional-CRB constraint on
detector-accepted frames.

\end{itemize}

The remainder of the paper is organized as follows. Section~II introduces the communication and sensing signal model. Section~III analyzes the DB and DA formulations. Section~IV derives the detection-gated CRB, studies tracking under intermittent measurements and presents the DTE rate--CRB optimization. Section~V formulates and evaluates multi-user, multi-target resource allocation, and Section~VI concludes the paper.

\section{System Model and Performance Metrics}
\label{sec:system}

Consider an ISAC base station (BS) equipped with $N_t$ transmit antennas and $N_r$ co-located sensing receive antennas, simultaneously serving $K$ communication users and sensing $Q$ targets. Communication
user $k$ has $N_{u,k}$ receive antennas. The communication user and sensing-target index sets are $\mathcal I_c=\{1,\ldots,K\}$ and $\mathcal I_s=\{1,\ldots,Q\}$, respectively. The BS transmits $M$ spatially precoded streams. Communication users and sensing targets
may be served by separate streams occupying disjoint subbands, or a user and target may be paired on the same stream. Both deployments are considered in Section~\ref{sec:resource-allocation}. In frame $\ell$,\footnote{From the sensing viewpoint, a frame defines one sensing-update interval and may comprise a coherent processing interval containing multiple pulses or OFDM symbols. The known transmitted symbols are included in the sensing reference for coherent processing~\cite{mirabella2023deterministic}} the transmitted signal is
\begin{equation}
\mathbf{x}_{\ell}(t)
=
\sum_{i=1}^{M}
\sqrt{p_{i,\ell}}\mathbf{f}_{i,\ell}
s_{i,\ell}(t;b_{i,\ell})
e^{j2\pi f^{c}_{i,\ell}t},
\qquad 0\leq t<T_f ,
\label{eq:tx-signal}
\end{equation}
where $T_f$ is the frame duration,
$\mathbf f_{i,\ell}\in\mathbb C^{N_t}$ is the unit-norm beamformer of stream $i$, and $p_{i,\ell}$ and $b_{i,\ell}$ are the power and bandwidth allocated to that stream, respectively. The stream allocations share the total resource budgets $P_t$ and $B_t$, such
that $\sum_{i=1}^{M}p_{i,\ell}\leq P_t$ and $\sum_{i=1}^{M}b_{i,\ell}\leq B_t$. The beamformers are held fixed during resource allocation. The unit-power baseband waveform $s_{i,\ell}(t;b_{i,\ell})$ has bandwidth $b_{i,\ell}$ and is centered at $f^c_{i,\ell}$. We assume a waveform with a rectangular baseband spectrum. For notational simplicity, the frame index is suppressed below except when temporal evolution is studied in Section \ref{sec:dte}.

\subsection{Communication Channel and Achievable Rate}

Let $\mathbf H_k\in\mathbb C^{N_{u,k}\times N_t}$ and $\mathbf u_k$ denote the channel and unit-norm receive combiner of communication user $k$, respectively. The channel is constant over one frame and the assigned subband. Its effective gain is assumed available for receive processing and resource allocation, while channel-estimation and synchronization errors are assumed to be negligible. With orthogonal subbands and noise power spectral density $N_{0,c}$, the noise power over bandwidth $b_k$ is $N_{0,c}b_k$. This is consistent with power--bandwidth allocation setups in~\cite{dong2022sensing,li2026distributed}. Defining $G_{c,k}=|\mathbf u_k^H\mathbf H_k\mathbf f_k|^2/N_{0,c}$, the post-combining communication SNR is $\gamma_{c,k}=G_{c,k}p_k/b_k$. Distinct streams occupy nonoverlapping, sufficiently confined subbands, as in \cite{dong2022sensing,li2026distributed}. The communication and sensing receivers isolate each subband before decoding or matched filtering, so cross-stream interference is neglected. All streams remain coupled through the total power and bandwidth budgets.

Therefore, the achievable rate on the subband assigned to user $k$ is given as follows:
\begin{equation}
R_k(p_k,b_k)
=
b_k\log_2\left(1+\frac{G_{c,k}p_k}{b_k}\right)
\quad [\mathrm{bps}].
\label{eq:comm-rate}
\end{equation}

\subsection{Monostatic Sensing Model and Performance Metrics}

The signal received at the co-located sensing array is represented as:
\begin{equation}
\mathbf y_s(t)
=
\sum_{q=1}^{Q}
\alpha_q
\mathbf a_r(\theta_q)
\mathbf a_t^H(\theta_q)
\mathbf x(t-\tau_q)e^{j2\pi\nu_qt}
+
\mathbf w_s(t),
\label{eq:sensing-observation}
\end{equation}
where $\alpha_q$, $\tau_q$, $\nu_q$, and $\theta_q$ are the complex reflectivity, delay, Doppler frequency, and angle of target $q$, respectively, with $\sigma_{\alpha,q}^2=\mathbb E[|\alpha_q|^2]$. The vectors $\mathbf a_t(\theta)$ and $\mathbf a_r(\theta)$ are the transmit and receive steering vectors, and $\mathbf w_s(t)$ contains receiver noise and clutter. The target parameters are assumed to remain approximately constant within each frame.

To isolate how sensing-performance characterization affects communication--sensing resource allocation, we assume that standard front-end clutter suppression and covariance-aware filtering have been performed~\cite[Secs.~5.2.3 and~9.4]{richards2014radar}. Residual clutter and thermal noise are modeled jointly as effective additive white Gaussian noise (AWGN). Clutter-aware ISAC receiver processing has also been studied~\cite{luo2024clutter,rivetti2025clutter}. The distinction among DB, DA, and DTE (Sections~\ref{sec:db-da} and~\ref{sec:dte}) remains applicable under such models because it concerns how detection and estimation enter the performance characterization, rather than the assumed clutter statistics. For signal-dependent clutter, the closed-form detection probability,
range CRB, and detection-gating/selection factor used here may not apply directly, and their model-specific re-derivation or numerical evaluation is beyond the scope of this work.

We focus on tracking-mode transmission following an initial search stage that has established a coarse delay--Doppler--angle cell for each target. In subsequent frames, the receiver validates the target echo and locally refines its parameters within the predicted cell. This paper studies resource allocation using the range CRB $\mathrm{CRB}_{d,q}$, while Doppler and angle are assumed available or estimated separately. Different targets are assumed to be separable through their spatial responses or delay--Doppler cells. Thus, initial search, association, and ambiguity resolution are outside the scope of this work. After spatial and subband separation, the samples retained for target $q$ are collected as:
\begin{equation}
\mathbf z_q
=
\sqrt{p_q}\alpha_q\mathbf s_q(\tau_q;b_q)
+
\mathbf v_q,
\label{eq:target-observation}
\end{equation}
where $\mathbf s_q(\tau_q;b_q)$ is the known sampled sensing waveform for target $q$ after fixed transmit and receive beamforming and subband separation. Its dependence on $\tau_q$ and $b_q$ determines the available range information. The post-processing noise-plus-clutter is
modeled as $\mathbf v_q\sim\mathcal{CN}(\mathbf 0,\sigma_{v,q}^2\mathbf I)$, where $\sigma_{v,q}^2$ accounts for thermal noise and residual clutter.

Following the Swerling-I model, the target reflectivity is constant within a frame and independent across frames, with
$\alpha_q\sim\mathcal{CN}(0,\sigma_{\alpha,q}^2)$. The waveform energy $\|\mathbf s_q(\tau_q;b_q)\|^2$ is held fixed as $b_q$ varies. Therefore, the post-processing echo and noise powers are $p_q\sigma_{\alpha,q}^2\|\mathbf s_q(\tau_q;b_q)\|^2$ and $\sigma_{v,q}^2$, respectively. The resulting sensing signal-to-clutter-plus-noise ratio (SCNR) is given as follows:
\begin{equation}
\gamma_{s,q}(p_q)
=
\underbrace{\frac{
\sigma_{\alpha,q}^2
\|\mathbf s_q(\tau_q;b_q)\|^2
}{
\sigma_{v,q}^2
}}_{\defn G_{s,q}} p_q
=
G_{s,q}p_q.
\label{eq:sensing-scnr}
\end{equation}
With unit-power waveform normalization and fixed $T_f$, varying $b_q$ changes the waveform's delay sensitivity without changing its energy. Hence, $G_{s,q}$ is independent of $b_q$, while bandwidth affects the range Fisher information.

\subsubsection{Framewise Detection}

Under nominal alignment with the target waveform, the normalized matched-filter output is $r_q =
{\mathbf s_q^H\mathbf z_q}/ ({\sigma_{v,q}\|\mathbf s_q\|}).$ The energy detector determines whether the current-frame observation contains a usable target echo by performing the following binary hypothesis test:
\begin{align}
\mathcal H_{0,q}:&\quad r_q=n_q,
&&\text{no usable target echo},\\
\mathcal H_{1,q}:&\quad r_q=\widetilde{\alpha}_q+n_q,
&&\text{target echo present},
\end{align}
where
$n_q\sim\mathcal{CN}(0,1)$ and $\widetilde{\alpha}_q
=\sqrt{p_q}\alpha_q\|\mathbf s_q\|/\sigma_{v,q}
\sim\mathcal{CN}(0,\gamma_{s,q})$. Averaging over the Swerling-I reflectivity gives $r_q\mid\mathcal H_{0,q}\sim\mathcal{CN}(0,1)$ and $r_q\mid\mathcal H_{1,q}\sim\mathcal{CN}(0,1+\gamma_{s,q})$. For these distributions, the Neyman--Pearson likelihood-ratio test is monotone in $|r_q|^2$ and is therefore equivalent to the energy test: $D_q=1$ if $|r_q|^2>\delta_q$, and $D_q=0$ otherwise, where $D_q=1$ denotes a positive detector decision. The corresponding detection event is $\mathcal D_q=\{D_q=1\}=\{|r_q|^2>\delta_q\}$. 

Since $|r_q|^2$ is exponentially distributed with mean one under $\mathcal H_{0,q}$, a prescribed false-alarm probability gives $\Pi_{FA,q}=e^{-\delta_q}$, or $\delta_q=-\ln\Pi_{FA,q}$. Under $\mathcal H_{1,q}$, its mean is $1+\gamma_{s,q}$, giving
\begin{equation}
\Pi_{D,q}(p_q)
=
\Pr\{\mathcal D_q\mid\mathcal H_{1,q}\}
=
\Pi_{FA,q}^{1/(1+G_{s,q}p_q)}.
\label{eq:detection-probability}
\end{equation}

\subsubsection{Nominal Range CRB}

For range estimation, let $\tau_q$ be a deterministic unknown parameter and $\alpha_q$ be a random nuisance parameter. Under the white Gaussian noise-plus-clutter model, the delay--reflectivity Fisher information cross term is proportional to $\int f|S_q(f)|^2\,df$ and therefore vanishes for a zero-centered baseband spectrum~\cite{vantrees2001detection}. For the assumed rectangular spectrum, the mean-squared bandwidth is $\beta_{\mathrm{rms},q}^2=b_q^2/12$. Using $d_q=c\tau_q/2$, the nominal range CRB is given as follows
\cite{liu2022joint}:
\begin{equation}
\mathrm{CRB}_{d,q}(p_q,b_q)
=
\frac{3c^2}
{8\pi^2G_{s,q}p_qb_q^2}
\quad [\mathrm{m}^2].
\label{eq:nominal-range-crb}
\end{equation}
The nominal CRB is not conditioned on the energy detector output. It assumes that every target-present frame, including weaker returns, is an estimation input once the initial search-mode target detection has succeeded. Thus, $\mathrm{CRB}_{d,q}$ and $\Pi_{D,q}$ characterize distinct aspects of framewise sensing performance whose resource-allocation consequences are studied next.

\section{Detection-Blind and Detection-Aware Rate--CRB  Analysis}
\label{sec:db-da}
This section studies two communication-rate maximization problems under different sensing-performance constraints. We first consider a DB rate--CRB formulation. Although the rate--CRB pair is widely used in ISAC, our objective is to examine it from a detection-feasibility perspective. Specifically, we determine whether a CRB-based allocation yields sufficient echo energy for the current-frame return to be reliably distinguished from noise and residual clutter. The analysis identifies the regimes in which the DB allocation can be insufficient. Motivated by this limitation, we formulate DA as an architecture-neutral characterization that imposes the nominal CRB and detection probability as parallel requirements, without specifying how the receiver couples their outputs. Their coupling in a DTE receiver is discussed later in Section \ref{sec:dte}.
\subsection{Single-Link Performance Metrics}
To keep the analysis tractable and expose the underlying power--bandwidth dependence mechanism, we consider a single communication user and a single sensing target served by a shared ISAC waveform. The multi-user, multi-target problem is studied later in Section~\ref{sec:resource-allocation}. The transmit and receive beamformers are fixed, with their array and propagation gains absorbed into the effective communication and sensing gains, $G_c$ and $G_s$. The waveform is assigned power $P$ and bandwidth
$B$, where $0<P\leq P_{\max}$ and $B_{\min}\leq B\leq B_{\max}$. For the signal model introduced in Section~\ref{sec:system}, the communication rate, detection probability, and range CRB for this single ISAC link are given as:
\begin{align}
R(P,B)
&=B\log_2\left(1+\frac{G_cP}{B}\right),
\label{eq:rate-single}\\
\Pi_D(P)
&=\Pi_{FA}^{1/(1+G_sP)},
\label{eq:pd-single}\\
\CRB_d(P,B)
&=\frac{3c^2}{8\pi^2G_sPB^2}.
\label{eq:crb-single}
\end{align}
For a given detection-probability level $\Pi_{FA}<\Pi_D^{\rm th}<1$, define $\gamma_s^{\rm th}\defn \ln\Pi_{FA}/\ln\Pi_D^{\rm th}-1$ and
$P_{\rm th}\defn\gamma_s^{\rm th}/G_s$. The same power and bandwidth determine all three quantities, but detection depends on power through $G_sP$, while the range CRB depends on $PB^2$. This difference is central to the analysis below. 
\subsection{DB Rate Optimization on a Constant-CRB Contour}
\label{sec:constant-crb}
We first consider communication-rate maximization for a prescribed range CRB. This formulation corresponds to a sensing model in which the target's presence has  been established within a range cell and \textit{the estimator processes every subsequent frame under the target-present model}, including frames containing weak target returns. We denote the CRB achieved by an allocation as $\CRB_d(P,B)$ and its prescribed value as $\overline{\CRB}_d$. Defining
$\kappa\defn 3c^2/(8\pi^2G_s\overline{\CRB}_d)$,
Equation~\eqref{eq:crb-single} shows that the $(P,B)$ allocations achieving this CRB form a one-parameter contour. With $\mathcal B=(0,P_{\max}]\times[B_{\min},B_{\max}]$, this contour is the following set:
\begin{equation}
\mathcal C(\kappa)
\defn
\left\{(P,B)\in\mathcal B:PB^2=\kappa\right\},
\label{eq:crb-contour}
\end{equation}
where $\mathcal C(\kappa)\neq \emptyset$ if $\kappa\leq P_{\max}B_{\max}^2$.
\begin{remark}[Purpose of the Constant-CRB Contour]
\label{rem:constant-crb}
Conventionally, a CRB requirement is written as an inequality $\CRB_d(P,B)\leq\overline{\CRB}_d$. In the single-user, single-target shared-resource setting considered here, maximizing rate subject only to this inequality drives power and bandwidth toward their available maxima, obscuring how different resource compositions providing the same CRB support the hierarchical sensing operation. Therefore, we use the constant-CRB contour as an analytical device to expose the effect of this resource-composition ambiguity. For the adopted sensing model, the allocations satisfying $PB^2=\kappa$ form the boundary of the CRB-feasible region $PB^2\geq\kappa$. Varying the prescribed CRB moves this boundary. In DA, the detection probability constraint restricts its feasible portion through $P\geq P_{\rm th}$, while detector gating in DTE changes its shape. Thus, the contours show how the feasible resource region and the rate-optimal allocation change across DB, DA, and DTE in the $(P,B)$ plane. Section~\ref{sec:resource-allocation} returns to conventional CRB inequality constraints and interprets the multi-user, multi-target allocation results relative to the corresponding constant-CRB contours.
\end{remark}

On the constant-CRB contour $\mathcal C(\kappa)$, the maximum achievable rate is given as:
\begin{equation}
R_{\rm DB}^{\star}(\kappa) \;\defn\;
\max_{(P,B)\in\mathcal C(\kappa)} R(P,B).
\label{eq:RDB-def}
\end{equation}
Since this optimization does not include a detection probability constraint, we call this the \textit{DB rate--CRB optimum}. Substituting $P = \kappa/B^2$
into~\eqref{eq:rate-single} gives
\begin{equation}
R(B;\kappa) = B\log_2\left(1 + \frac{G_c\kappa}{B^3}\right).
\label{eq:rate-contour}
\end{equation}
Setting $dR/dB = 0$ yields $\ln(1+u) = 3u/(1+u)$ with
$u \defn G_c\kappa/B^3$, whose unique positive root is
$u^\star \approx 15.80$. The interior optimum is realized at
\begin{equation}
\label{eq:db-interior}
B_{\rm DB}^{\star}(\kappa) = \Big(\frac{G_c\kappa}{u^\star}\Big)^{1/3},
\quad
P_{\rm DB}^{\star}(\kappa) = \frac{u^{\star\,2/3}\kappa^{1/3}}{G_c^{2/3}},
\end{equation}
when $(P_{\rm DB}^{\star}, B_{\rm DB}^{\star}) \in \mathcal B$.
The sensing SCNR at this optimum is given as:
\begin{equation}
\label{eq:db-snr}
    \gamma_{s,\rm DB}^{\star}(\kappa) = G_s\,P_{\rm DB}^{\star}(\kappa) =
G_s\,(u^{\star\,2}\,\kappa/G_c^2)^{1/3}.
\end{equation}
The optimization does not check whether detection is reliable at this point, so $R_{\rm DB}^{\star}(\kappa)$ is well-defined even when the detection probability is arbitrarily small. {\color{black} The complete box-constrained solution, including the minimum-bandwidth, maximum-power, and maximum-bandwidth boundary cases, is given in Appendix~\textbf{\ref{app:db-complete}}. Here, we focus on the interior solution that is relevant to the detection-feasibility mechanism.
}
\begin{figure}[t]
\centering
\includegraphics[width=\columnwidth]{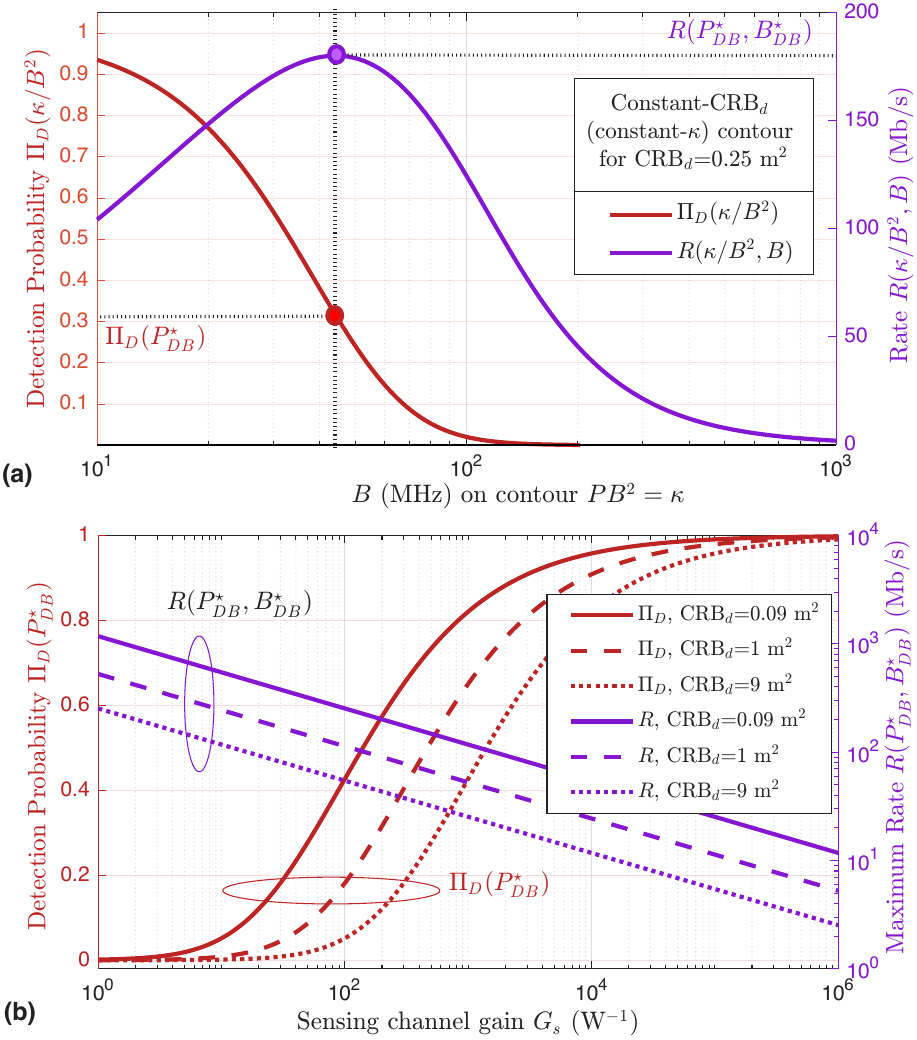}
\caption{\textbf{(a)} Rate and detection probability on a
constant-CRB contour; \textbf{(b)} rate and detection probability at the DB rate--CRB optimum versus sensing channel gain for different prescribed CRBs.}
\label{fig:db}
\end{figure}

Fig.~\ref{fig:db}(a) shows that on a 
constant-CRB contour, the DB rate--CRB optimum lands at an operating point with $\Pi_D$ far below the conventional radar design value $\Pi_D^{\rm th}=0.9$~\cite{richards2014radar}. The cause is structural:
a constant CRB is held by trading power for bandwidth along
$PB^2=\kappa$, so the rate maximizer can slide toward small $P$ and large $B$ without affecting CRB, while detection depends on {\color{black}the post-processing SCNR $\gamma_s=G_sP$ and therefore has no corresponding bandwidth lever.} Fig.~\ref{fig:db}(b) shows that this behavior persists across deployments: as $G_s$ shrinks, $\Pi_D$ at the DB rate--CRB optimum collapses sharply while $R_{\rm DB}^{\star}$ increases.

\subsubsection{Detection-Feasible CRB Constraint}
\label{sec:df-threshold}
Although detection does not enter the DB optimization, \eqref{eq:db-snr} gives its implied detection probability. We now determine whether this probability reaches a given level $\Pi_D^{\rm th}$ and identify the corresponding condition on the
prescribed range CRB.

\begin{proposition}[Detection-Feasible CRB Constraint]
\label{thm:critical-crb}
Assume that the interior solution in~\eqref{eq:db-interior} lies strictly inside $\mathcal B$. The DB optimum satisfies
$\Pi_D(P_{\rm DB}^{\star}(\kappa))\geq\Pi_D^{\rm th}$ if and only if
\begin{equation}
\begin{aligned}
\overline{\CRB}_d
\leq 
\CRB_{\rm DF}
\defn
\frac{3c^2u^{\star\,2}G_s^2}
{8\pi^2G_c^2(\gamma_s^{\rm th})^3}
=
\frac{3c^2u^{\star\,2}G_s^2}
{8\pi^2G_c^2
\left(
\displaystyle\frac{\ln\Pi_{FA}}
{\ln\Pi_D^{\rm th}}-1
\right)^3}.
\end{aligned}
\label{eq:critical-crb}
\end{equation}
Equivalently, the contour parameter must satisfy
$\kappa\geq\kappa_{\rm DF}$, where
$\kappa_{\rm DF}\defn G_c^2(\gamma_s^{\rm th})^3/
(u^{\star\,2}G_s^3)$.
\end{proposition}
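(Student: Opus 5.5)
The plan is a chain of equivalences, using only two monotonicity facts and the closed-form interior optimum in~\eqref{eq:db-interior}.

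\textbf{Step 1: reduce detection to a power threshold.} By~\eqref{eq:pd-single}, $\Pi_D(P)=\Pi_{FA}^{1/(1+G_sP)}$. Since $0<\Pi_{FA}<1$, we have $\ln\Pi_{FA}<0$. Taking logarithms, $\Pi_D(P)\geq\Pi_D^{\rm th}$ is equivalent to
\[
\frac{\ln\Pi_{FA}}{1+G_sP}\geq\ln\Pi_D^{\rm th}.
\]
Both logarithms are negative and $1+G_sP>0$, so multiplying through and dividing by $\ln\Pi_D^{\rm th}<0$ gives $1+G_sP\geq \ln\Pi_{FA}/\ln\Pi_D^{\rm th}$. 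This is $G_sP\geq\gamma_s^{\rm th}$, i.e., $P\geq P_{\rm th}$. The assumption $\Pi_{FA}<\Pi_D^{\rm th}<1$ ensures $\gamma_s^{\rm th}>0$, so the threshold is meaningful. The only care needed in this step is tracking the direction of each inequality under the sign changes.

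\textbf{Step 2: substitute the DB optimum.} By hypothesis the interior solution lies strictly inside $\mathcal B$, so the DB optimum is given by~\eqref{eq:db-interior}. I take the uniqueness of the root $u^\star$ as established in the preceding derivation. The condition from Step 1 then becomes $\gamma^\star_{s,\rm DB}(\kappa)\geq\gamma_s^{\rm th}$. Using~\eqref{eq:db-snr}, this reads
\[
G_s\,(u^{\star 2}\kappa/G_c^2)^{1/3}\geq\gamma_s^{\rm th}.
\]
Cubing is monotone on positive reals, so this is equivalent to $\kappa\geq G_c^2(\gamma_s^{\rm th})^3/(u^{\star 2}G_s^3)=\kappa_{\rm DF}$. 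That gives the equivalent contour-parameter form stated in the proposition.

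\textbf{Step 3: translate to the CRB requirement.} Recall $\kappa=3c^2/(8\pi^2G_s\overline{\CRB}_d)$, which is strictly decreasing in $\overline{\CRB}_d>0$. Hence $\kappa\geq\kappa_{\rm DF}$ is equivalent to
\[
\overline{\CRB}_d\leq \frac{3c^2}{8\pi^2G_s\kappa_{\rm DF}}=\frac{3c^2u^{\star 2}G_s^2}{8\pi^2G_c^2(\gamma_s^{\rm th})^3}.
\]
Substituting the definition of $\gamma_s^{\rm th}$ yields the second form in~\eqref{eq:critical-crb}.

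Every step is an equivalence, so the ``if and only if'' follows. I do not expect a genuine obstacle here. The points to handle carefully are the sign manipulation with negative logarithms in Step 1 and the fact that the interior-solution hypothesis is what makes~\eqref{eq:db-snr} valid. Without that hypothesis, a boundary case from Appendix~\ref{app:db-complete} would replace $P_{\rm DB}^\star$, and the threshold would change.
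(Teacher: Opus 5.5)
Your proposal is correct and follows essentially the same route as the paper's proof: reduce $\Pi_D\geq\Pi_D^{\rm th}$ to $\gamma_{s,\rm DB}^{\star}\geq\gamma_s^{\rm th}$, substitute \eqref{eq:db-snr}, cube to get $\kappa\geq\kappa_{\rm DF}$, and then invert the definition of $\kappa$. Where the paper simply cites the strict monotonicity of $\Pi_D$ in the SCNR, you derive the threshold explicitly through the sign-careful logarithm manipulation.
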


\begin{proof}
$\Pi_D$ in~\eqref{eq:pd-single} is strictly increasing
in the sensing SCNR. Hence, $\Pi_D(P_{\rm DB}^{\star})\geq\Pi_D^{\rm th}$ is equivalent to $\gamma_{s,\rm DB}^{\star}\geq\gamma_s^{\rm th}$. Using
\eqref{eq:db-snr}, this condition becomes
$
G_s
\left(
{u^{\star\,2}\kappa}/{G_c^2}
\right)^{1/3}
\geq\gamma_s^{\rm th}.
$
Cubing both sides and rearranging gives
$\kappa\geq\kappa_{\rm DF}$. Substituting
$\kappa=3c^2/(8\pi^2G_s\overline{\CRB}_d)$ yields
\eqref{eq:critical-crb}.
\end{proof}

{\color{black}
The closed-form threshold in Proposition~\ref{thm:critical-crb} applies to the single-link, constant-CRB interior optimum considered here. The underlying relationship is more general: tightening a CRB constraint can indirectly increase the sensing SCNR and thereby improve the detection probability, even when detection is absent from the optimization. The corresponding transition in a multi-target problem also depends on the shared resource budgets and target parameters, as demonstrated in Section~\ref{sec:resource-allocation}.
}

From~\eqref{eq:db-snr},
$\gamma_{s,\rm DB}^{\star}\propto\overline{\CRB}_d^{-1/3}$, so reducing the prescribed CRB by a factor of eight only doubles the sensing SCNR; accordingly, $\CRB_{\rm DF}\propto(G_s/G_c)^2 (\gamma_s^{\rm th})^{-3}$ and becomes looser as the sensing channel strengthens relative to the communication channel. This slow inverse-cubic relationship makes CRB tightening an inefficient way to obtain detection reliability. Moreover, a stricter CRB constraint also draws more bandwidth in addition to power. Thus detection probability should be constrained explicitly when it must be guaranteed independently of $\overline{\CRB}_d$, as discussed next.

\subsection{Detection-Aware Rate--CRB Characterization}
\label{sec:da}

The preceding analysis shows that the operating point corresponding to the DB rate--CRB optimum may not yield an adequate detection probability. We now present a DA rate--CRB optimization that adds the detection-probability constraint $\Pi_D\geq\Pi_D^{\rm th}$ to the rate-maximization problem. We assume that the prescribed CRB is achievable within the operating region, i.e.,
$\kappa\leq P_{\max}B_{\max}^2$.

Define $B_{\rm th}(\kappa)=\sqrt{\kappa/P_{\rm th}}$ and
$\mathcal A(\kappa)\defn\mathcal C(\kappa)\cap\{P\geq P_{\rm th}\}$. When $\mathcal A(\kappa)\neq \emptyset$, the detection-aware (DA) operating rate is given as:
\begin{equation} 
R_{\rm DA}^{\star}(\kappa,\Pi_D^{\rm th})
=
\max_{(P,B)\in\mathcal A(\kappa)}R(P,B).
\label{eq:da-rate}
\end{equation}
Thus, the DB and DA operating points provide an identical  range CRB, while the DA point additionally satisfies a required detection probability.  Practically, this means that once target presence is established in a given range cell during an initial search, the energy detector and estimator operate on every subsequent frame. Here, we do not specify how their outputs are coupled; Section~\ref{sec:dte} then considers their coupling for a DTE receiver.

It is possible that the prescribed-CRB contour contains no
detection-feasible point even though the CRB and detection requirements can be jointly satisfied within the resource bounds. This occurs when $\kappa<P_{\rm th}B_{\min}^2$. In this case, $P\geq P_{\rm th}$ and $B\geq B_{\min}$ imply $PB^2>\kappa$, so satisfying the detection requirement necessarily over-delivers the CRB. Let $\mathcal F_{\rm DA}(\kappa) \defn\{(P,B)\in\mathcal B: PB^2\geq\kappa,\;P\geq P_{\rm th}\}$. When $\mathcal A(\kappa)=\emptyset$ but
$\mathcal F_{\rm DA}(\kappa)\neq \emptyset$, we select the off-contour operating point lexicographically as follows:
\begin{equation}
\begin{aligned}
\kappa_{\rm DA}^{\star}(\kappa)
&=
\min_{(P,B)\in\mathcal F_{\rm DA}(\kappa)}PB^2,\\
R_{\rm DA}^{\star}
&=
\max_{\substack{(P,B)\in\mathcal F_{\rm DA}(\kappa)\\
PB^2=\kappa_{\rm DA}^{\star}(\kappa)}}
R(P,B).
\end{aligned}
\label{eq:DA-lexicographic}
\end{equation}
Here, $\kappa_{\rm DA}^{\star}(\kappa)=\kappa$ when the equality contour is feasible, whereas $\kappa_{\rm DA}^{\star}(\kappa)>\kappa$ in the over-delivery regime. The first stage limits CRB over-delivery, and the second maximizes rate among the allocations attaining that minimum. For $\kappa<P_{\rm th}B_{\min}^2$, this gives
$\kappa_{\rm DA}^{\star}=P_{\rm th}B_{\min}^2$ and
$(P_{\rm DA}^{\star},B_{\rm DA}^{\star}) =(P_{\rm th},B_{\min})$. The achieved CRB is then $\CRB_d(P_{\rm th},B_{\min}) =3c^2/(8\pi^2G_sP_{\rm th}B_{\min}^2)
<\overline{\CRB}_d$. If $\mathcal F_{\rm DA}(\kappa)=\emptyset$, the two sensing requirements cannot be jointly satisfied within the available resources.
The complete DA feasibility characterization is provided in
Appendix~\textbf{\ref{app:da-complete}}. 

Combining the DB rate optimum with the minimum power $P_{\rm th}$ imposed by the detection-probability constraint yields the DA rate--CRB characterization. Depending on $\kappa$, the DA solution retains the DB optimum, moves along the constant-CRB contour to $(P_{\rm th},B_{\rm th})$, or leaves the contour and over-delivers the prescribed CRB at $(P_{\rm th},B_{\min})$. Under the interior-DB regime of Proposition~\ref{thm:critical-crb}, and assuming $P_{\rm th}\leq P_{\max}$, combining the DB optimum with the detection-imposed power floor gives the following optimal achievable rate:
\begin{equation}
R_{\rm DA}^{\star}
=
\begin{cases}
R_{\rm DB}^{\star}(\kappa),
&\kappa\geq\kappa_{\rm DF},\\[1mm]
R(P_{\rm th},B_{\rm th}),
&B_{\min}^2P_{\rm th}\leq\kappa<\kappa_{\rm DF},\\[1mm]
R(P_{\rm th},B_{\min}),
&\kappa<B_{\min}^2P_{\rm th}.
\end{cases}
\label{eq:da-regimes}
\end{equation}
The first row is the \emph{detection-feasible regime}, where the DB optimum already satisfies the detection constraint. In the \emph{contour-sliding regime}, the optimum moves along the prescribed-CRB contour to $(P_{\rm th},B_{\rm th})$. In the \emph{over-delivery regime}, the contour cannot reach $P_{\rm th}$ while satisfying $B\geq B_{\min}$, so the lexicographic extension selects $(P_{\rm th},B_{\min})$ and provides a CRB below the prescribed value. Since $\kappa_{\rm DF}\propto G_s^{-3}$, a weaker sensing channel reduces the DB-feasible region and expands the other two regimes.
\subsubsection{Rate--Detection Relationship}
\label{sec:rate-detection}

The complete rate--detection relation on $\mathcal C(\kappa)$ can be parameterized directly by the achieved detection probability. For $\Pi_D\in(\Pi_{FA},1)$, let $\gamma_s(\Pi_D)=\ln\Pi_{FA}/\ln\Pi_D-1$. Then $P=\gamma_s(\Pi_D)/G_s$ and, from $PB^2=\kappa$, $B=\sqrt{\kappa G_s/\gamma_s(\Pi_D)}$. Substitution into
\eqref{eq:rate-single} gives
\begin{equation}
R(\Pi_D;\kappa)
=
\sqrt{\frac{\kappa G_s}{\gamma_s(\Pi_D)}}
\log_2\left(
1+
\frac{G_c[\gamma_s(\Pi_D)]^{3/2}}
{G_s^{3/2}\sqrt{\kappa}}
\right).
\label{eq:rate-detection-explicit}
\end{equation}

Equation~\eqref{eq:rate-detection-explicit} describes how the communication rate changes with the achieved detection probability while the nominal range CRB remains fixed. If the DB optimum already satisfies $\Pi_D^{\rm th}$, its allocation remains unchanged. Otherwise, when $(P_{\rm th},B_{\rm th})$ lies within the resource bounds, the DA operating rate on the prescribed-CRB contour is $R(\Pi_D^{\rm th};\kappa)$. If this point is outside the feasible portion of the contour, the off-contour extension described above is required.

\subsubsection{Signed Rate Gap and Operating Regimes}
To quantify the communication-rate effect of adding the detection constraint to the prescribed-CRB problem, define
$\Delta R^{\star}(\kappa,\Pi_D^{\rm th})
\defn R_{\rm DB}^{\star}(\kappa)
-R_{\rm DA}^{\star}(\kappa,\Pi_D^{\rm th})$.
Also, let $\kappa_{B_{\min}}\defn u^{\star}B_{\min}^3/G_c$.

\begin{proposition}[Signed Rate Gap]
\label{prop:signed-gap}
The signed rate gap is: (a) zero in the DB-feasible regime; (b) strictly positive in the contour-sliding regime; and (c) in the over-delivery regime, strictly negative for $\kappa<\kappa_{B_{\min}}$, where both operating points lie at $B=B_{\min}$ and the DA point uses the higher
power $P_{\rm th}>\kappa/B_{\min}^2$, and of either sign for
$\kappa_{B_{\min}}\leq\kappa<B_{\min}^2P_{\rm th}$, where the DB point is interior while the DA point is $(P_{\rm th},B_{\min})$. The gap changes sign within the latter interval when it is nonempty.
\end{proposition}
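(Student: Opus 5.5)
The proof goes regime by regime, using the DA characterization \eqref{eq:da-regimes} and the structure of the DB problem on the contour. The main tool is the one-dimensional reduction $R(B;\kappa)=B\log_2(1+G_c\kappa/B^3)$ from \eqref{eq:rate-contour}. The first step is to show that this function is strictly unimodal in $B>0$: increasing for $u=G_c\kappa/B^3>u^\star$ and decreasing for $u<u^\star$. This holds because $g(u)=\ln(1+u)-3u/(1+u)$ has a unique positive root and is negative below it. It follows that the DB optimum over $B\in[B_{\min},B_{\max}]$ is the interior point $B_{\rm DB}^\star$ when that point is feasible, and the endpoint $B_{\min}$ when $B_{\rm DB}^\star<B_{\min}$, i.e.\ when $\kappa<\kappa_{B_{\min}}$. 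This is the boundary case in Appendix~\ref{app:db-complete}.

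\textbf{Parts (a) and (b).} In the DB-feasible regime $\kappa\ge\kappa_{\rm DF}$, Proposition~\ref{thm:critical-crb} shows that the DB point already satisfies $P\ge P_{\rm th}$. It therefore lies in $\mathcal A(\kappa)\subseteq\mathcal C(\kappa)$ and maximizes over the larger set, so $R_{\rm DA}^\star=R_{\rm DB}^\star$ and the gap is zero.

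In the contour-sliding regime the DA point $(P_{\rm th},B_{\rm th})$ is on $\mathcal C(\kappa)$, so $R_{\rm DA}^\star\le R_{\rm DB}^\star$. Strictness follows from two facts. First, $P_{\rm DB}^\star<P_{\rm th}$ since $\kappa<\kappa_{\rm DF}$, which gives $B_{\rm th}<B_{\rm DB}^\star$. Second, the strict unimodality above means the two rates cannot be equal.

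\textbf{Part (c), first subcase $\kappa<\kappa_{B_{\min}}$.} Here the DB point is $(\kappa/B_{\min}^2,B_{\min})$ and the DA point is $(P_{\rm th},B_{\min})$. The over-delivery condition gives $P_{\rm th}>\kappa/B_{\min}^2$. Since $R(P,B)$ is strictly increasing in $P$, the DA rate is strictly larger and the gap is negative.

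\textbf{Part (c), second subcase $\kappa_{B_{\min}}\le\kappa<B_{\min}^2P_{\rm th}$.} Write
\[
\Delta(\kappa)=R(P_{\rm DB}^\star(\kappa),B_{\rm DB}^\star(\kappa))-R(P_{\rm th},B_{\min}).
\]
The subtracted term does not depend on $\kappa$. The first term equals $B_{\rm DB}^\star\log_2(1+u^\star)$ with $B_{\rm DB}^\star\propto\kappa^{1/3}$, so it is continuous and strictly increasing in $\kappa$.

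At the left endpoint $\kappa=\kappa_{B_{\min}}$, the DB point is $(\kappa_{B_{\min}}/B_{\min}^2,B_{\min})$, and $\kappa_{B_{\min}}/B_{\min}^2<P_{\rm th}$, so $\Delta<0$ by the same monotonicity-in-$P$ argument as before.

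At the right endpoint $\kappa\to B_{\min}^2P_{\rm th}$, the DA point $(P_{\rm th},B_{\min})$ lies on $\mathcal C(\kappa)$. Since $B_{\rm DB}^\star(\kappa)>B_{\min}$ on this interval (because $\kappa>\kappa_{B_{\min}}$), strict unimodality makes the interior DB rate strictly exceed the rate at $B_{\min}$, so $\Delta>0$ in the limit.

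The intermediate value theorem together with strict monotonicity then gives a unique sign change inside the interval whenever it is nonempty. This is also what "either sign" means in the statement.

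\textbf{Main obstacle.} The delicate step is the right-endpoint limit and the consistency of the regime boundaries. The over-delivery regime must sit below the contour-sliding regime, which requires $B_{\min}^2P_{\rm th}\le\kappa_{\rm DF}$, equivalently $B_{\min}\le B_{\rm DB}^\star$ at $\kappa=\kappa_{\rm DF}$. I expect to verify this directly from the definitions $\kappa_{\rm DF}=G_c^2(\gamma_s^{\rm th})^3/(u^{\star2}G_s^3)$ and $\kappa_{B_{\min}}=u^\star B_{\min}^3/G_c$. I also need to check that the interior DB point stays inside $\mathcal B$ near the right endpoint, and I would handle that by invoking the box-constraint cases of Appendix~\ref{app:db-complete}.
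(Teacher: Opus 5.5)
Your proposal is correct and follows essentially the same route as the paper's proof in Appendix~\ref{app:rate-gap}. Part (a) comes from the first row of \eqref{eq:da-regimes}, part (b) from $B_{\rm th}<B_{\rm DB}^\star$ on the same contour together with monotonicity of $R(B;\kappa)$ below its peak, and part (c) from monotonicity in $P$ at $B=B_{\min}$ followed by continuity and the intermediate value theorem. Your added observation, that $R_{\rm DB}^\star=B_{\rm DB}^\star\log_2(1+u^\star)$ is strictly increasing in $\kappa$ while $R(P_{\rm th},B_{\min})$ is constant, gives a unique sign change, which the paper does not state.

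One adjustment concerns your ``main obstacle''. The condition $B_{\min}^2P_{\rm th}\le\kappa_{\rm DF}$ cannot be derived from the definitions, because it is equivalent to $u^\star B_{\min}/G_c\le P_{\rm th}$, which is exactly the interior-DB condition $B_{\min}\le B_{\rm DB}^\star$ at $\kappa=\kappa_{\rm DF}$. It holds automatically whenever the last subinterval is nonempty, so treat it as part of the standing hypotheses rather than as something to prove.
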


\begin{proof}
The proof is provided in Appendix~\textbf{\ref{app:rate-gap}}.
\end{proof}

\begin{figure}[t]
\centering
\includegraphics[width=\columnwidth]{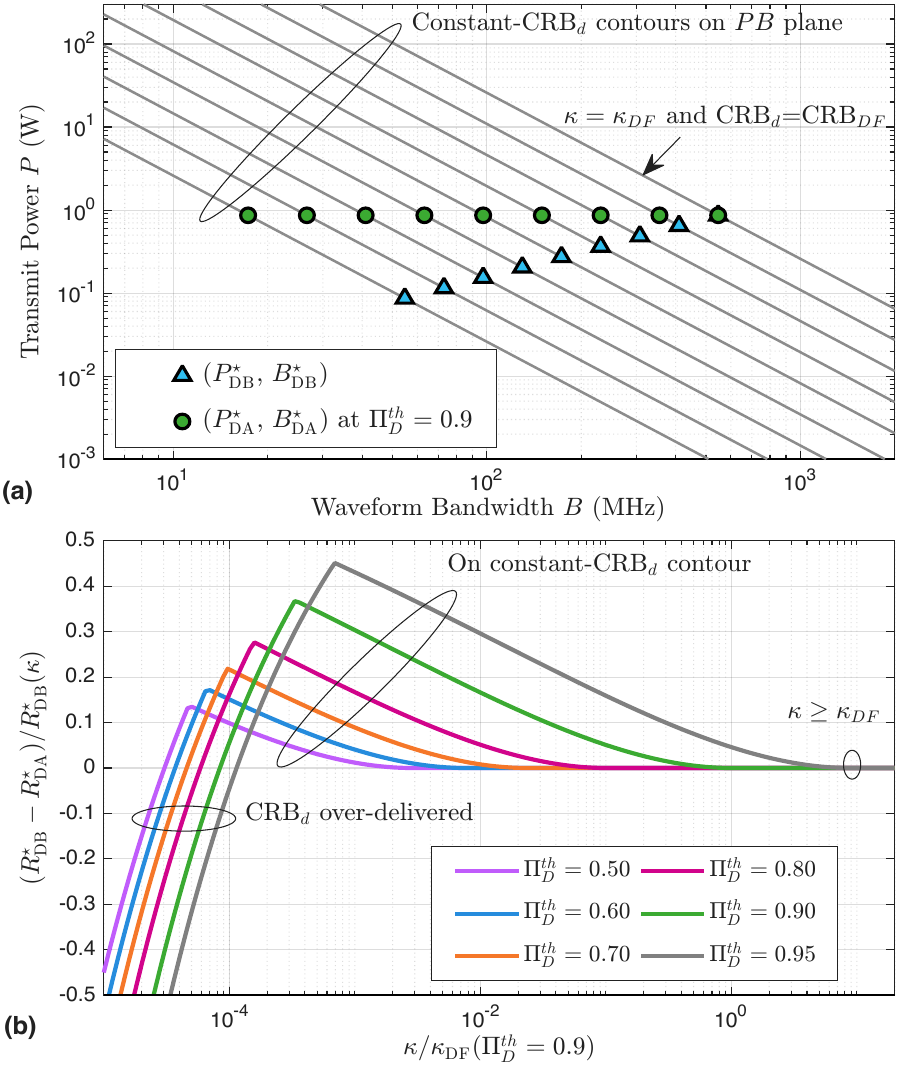}
\caption{Comparison of the DB and DA rate--CRB formulations:
\textbf{(a)} their rate-optimal $(P,B)$ allocations on constant-CRB contours; \textbf{(b)} the relative signed rate gap between their optima.}
\label{fig:db-da}
\end{figure}

Fig.~\ref{fig:db-da}(a) shows how the detection constraint changes the rate-optimal resource composition, while Fig.~\ref{fig:db-da}(b) shows the corresponding normalized gap $\Delta R^{\star}/R_{\rm DB}^{\star}(\kappa)$ for different values of $\Pi_D^{\rm th}$. In the DB-feasible regime, the two allocations coincide and the gap is zero. In the contour-sliding regime, DA moves toward higher power and lower bandwidth on the same constant-CRB
contour, producing a positive gap because this movement departs from the DB rate optimum. In the over-delivery regime, the contour cannot satisfy
the detection power floor with $B\geq B_{\min}$, so DA selects $(P_{\rm th},B_{\min})$ and achieves
$P_{\rm th}B_{\min}^2>\kappa$. The resulting lower CRB and additional power can make $R_{\rm DA}^{\star}>R_{\rm DB}^{\star}$, producing a negative gap. Note that this negative value reflects a comparison at unequal sensing accuracy rather than a rate gain from enforcing detection at the same CRB.

Section~\ref{sec:dte} next specifies a DTE receiver in which the hard energy-detector decision gates the range estimation.

\section{ISAC Performance under Detection-Gated Estimation and Tracking in a DTE Receiver}
\label{sec:dte}
In this section, we consider a DTE receiver, i.e., a signal-processing choice in which only the frames passing a threshold set by an energy detector are passed to the estimator \cite{chaumette2005influence, chaumette2007conditional}. On missed-detection frames, there is no estimate available. Using the framework in \cite{chaumette2005influence,chaumette2007conditional}, we derive a
closed-form expression for the detection-gated range CRB, denoted by CRB$_d^{\mathcal D}$, under the adopted ISAC signal and resource model
of Section~\ref{sec:system}. We then study how the feasible resource boundary changes relative to the DB and DA cases in Section \ref{sec:db-da}, and how optimal rate may be affected. Finally, we analyze the joint effect of detection probability and estimation error variance on tracking performance, and how the DB rate--CRB characterization can be insufficient from a tracker's viewpoint.

\subsection{Derivation of Detection-Gated Range CRB}
\label{sec:conditional-information}
We consider the post-target-acquisition tracking mode, where an initial search stage has established a coarse angle--delay--Doppler cell for the target. At each subsequent frame, the tracker predicts a range $d_0$ that
centers the matched-filter cell, and the detector validates the current-frame return in that cell. The true range is $d=d_0+\epsilon$, with $|2B\epsilon/c|\ll1$. Conditional on detection, the estimator refines the unknown within-cell displacement $\epsilon$ \cite{hua2024crbrate}.

Following the conditional-Fisher-information framework in
\cite{chaumette2005influence,chaumette2007conditional}, let
$\mathcal D\defn\{D=1\}$ denote the detector-acceptance event. The echo-bearing likelihood over accepted frames is given as follows:
\begin{equation}
p_{\mathcal D}(\mathbf z,\alpha\mid\boldsymbol\theta)
=
\frac{
p_1(\mathbf z\mid\boldsymbol\theta,\alpha)
p_{\alpha}(\alpha)
\mathbf 1_{\mathcal D}(\mathbf z)}
{\Pi_D(\boldsymbol\theta)},
\label{eq:conditional-likelihood}
\end{equation}
where $\boldsymbol\theta$ is the parameter(s) being estimated. The corresponding conditional Fisher information for range becomes:
\begin{equation}
\begin{aligned}
J_d^{\mathcal D}
&=
\mathbb E\left[
\left(
\frac{\partial}{\partial d}
\ln\!\left[
p_1(\mathbf z\mid d,\alpha)p_\alpha(\alpha)
\right]
\right)^2
\;\middle|\;
\mathcal D,\mathcal H_1
\right]\\
&
-
\left(
\frac{\partial}{\partial d}
\ln\Pi_D(d)
\right)^2.
\label{eq:conditional-fim-general}
\end{aligned}
\end{equation}
Thus, $(J_d^{\mathcal D})^{-1}$ bounds the MSE of an estimator that is locally unbiased under the distribution of the accepted observations.

For the post-clutter-suppression, white noise-plus-clutter model in Section \ref{sec:system},
\begin{equation}
\overline{\mathbf z}
=
\widetilde{\alpha}\mathbf u(d;B)+\mathbf n,
\qquad
\widetilde{\alpha}\sim\mathcal{CN}(0,\gamma_s),
\quad
\mathbf n\sim\mathcal{CN}(\mathbf0,\mathbf I),
\label{eq:conditional-whitened-model}
\end{equation}
where $\overline{\mathbf z}$ is the normalized observation and $\mathbf u(d;B)$ is its unit-norm range-dependent component, and $\gamma_s=G_sP$. The detector uses the template centered at $d_0$, giving $r=\widetilde{\alpha}
\varphi(d,d_0;B)+n_0$, where $\varphi(d,d_0;B)
=\mathbf u^H(d_0;B)\mathbf u(d;B)$ and $n_0\sim\mathcal{CN}(0,1)$. The acceptance event is $\mathcal D=\{|r|^2>\delta\}$. Since $r\mid\mathcal H_0\sim\mathcal{CN}(0,1)$, the statistic $|r|^2$ has a unit-mean exponential distribution, giving $\Pi_{FA}=\Pr\{|r|^2>\delta\mid\mathcal H_0\}=e^{-\delta}$ and hence $\delta=-\ln\Pi_{FA}$.

For the centered rectangular spectrum, $|\varphi(d_0+\epsilon,d_0;B)|^2 =1-\mathcal O((2B\epsilon/c)^2)$. Hence, small offsets from the predicted cell center change both the detection probability and the conditional range information only to second order in $\epsilon$. We therefore evaluate the local bound at $d=d_0$; this specifies the local evaluation point and does not imply that the receiver knows the current range. At this point, $r=\widetilde{\alpha}+n_0$. Since $\widetilde{\alpha}\sim\mathcal{CN}(0,\gamma_s)$ and
$n_0\sim\mathcal{CN}(0,1)$ are independent, $r\mid\mathcal H_1\sim\mathcal{CN}(0,1+\gamma_s)$, so $|r|^2$ is exponential with mean $1+\gamma_s$ and
$\Pi_D=\Pr\{|r|^2>\delta\mid\mathcal H_1\} =\exp[-\delta/(1+\gamma_s)]$.

Let $\dot{\mathbf u}=\partial\mathbf u/\partial d$ at $d_0$.
Without detector conditioning, the mean target energy is
$\mathbb E[|\widetilde{\alpha}|^2]=\gamma_s$, giving the nominal range information $J_d^{\rm nom}=2\gamma_s\|\dot{\mathbf u}\|^2$. For the centered spectrum, $\mathbf u^H\dot{\mathbf u}=0$, so $\Pi_D(d)$ is stationary at $d_0$ and $\partial\Pi_D/\partial d=0$.

Consequently, the second term in \eqref{eq:conditional-fim-general} vanishes at $d=d_0$. Therefore, detection-gating changes the range information through the energy distribution of the accepted frames. Thus,
\begin{equation}
J_d^{\mathcal D}
=
2\mathbb E[
|\widetilde{\alpha}|^2\mid\mathcal D]
\|\dot{\mathbf u}\|^2,
\label{eq:conditioned-range-information-proof}
\end{equation}
where $\|\dot{\mathbf u}\|^2=4\pi^2B^2/(3c^2)$ for the centered rectangular spectrum.

Now, for the detector-accepted ensemble,
\begin{equation}
\mathbb E[
|\widetilde{\alpha}|^2\mid\mathcal D]
=
\gamma_s
\left(
1+\frac{\delta\gamma_s}{(1+\gamma_s)^2}
\right),
\label{eq:conditional-target-energy}
\end{equation}
where the derivation of the conditional target energy over
detector-accepted frames is provided in Appendix~\textbf{\ref{app:conditional-energy}}.
Defining $h(\gamma_s)\defn1+\delta\gamma_s/(1+\gamma_s)^2$,
it follows that
\begin{equation}
J_d^{\mathcal D}
=
h(\gamma_s)J_d^{\rm nom},
\qquad
J_d^{\rm nom}
=
\frac{8\pi^2G_sPB^2}{3c^2},
\label{eq:conditional-range-information}
\end{equation}
Consequently,
\begin{equation}
\CRB_d^{\mathcal D}(P,B)
=
\frac{3c^2}
{8\pi^2G_sPB^2h(G_sP)}.
\label{eq:conditional-crb-explicit}
\end{equation}
Thus, detector gating scales the information in accepted frames by the \textit{selection factor} $h(\gamma_s)$ while retaining $PB^2$ as the underlying range-information factor. The factor satisfies $\lim_{\gamma_s\rightarrow0}h(\gamma_s)
=\lim_{\gamma_s\rightarrow\infty}h(\gamma_s)=1$ and reaches its maximum $h(1)=1+\delta/4$ when the target and effective clutter-plus-noise powers at the matched-filter output are equal. As $\gamma_s\rightarrow0$, $\Pi_D\rightarrow\Pi_{FA}$ and both the nominal and conditional information vanish. As $\gamma_s\rightarrow\infty$, $\Pi_D\rightarrow1$, almost every frame is accepted, and the conditional and nominal information coincide. This conditional information increase does not imply improved unconditional or per-transmitted-frame accuracy: missed frames deliver no range measurement, and their effect is represented separately through $\Pi_D$ and the intermittent-measurement tracking model below.
\begin{figure}[t]
\centering
\includegraphics[width=\columnwidth]{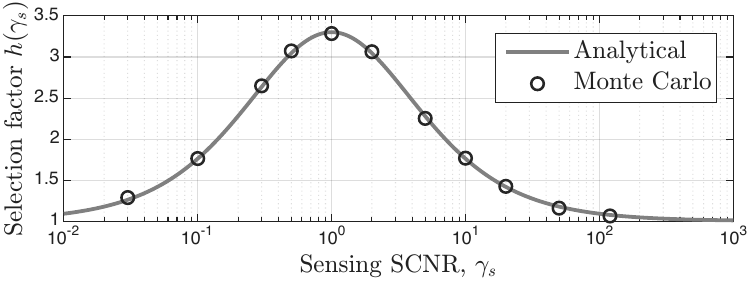}
\caption{Monte Carlo validation of the selection factor $h(\gamma_s)$.}
\label{fig:selection-factor}
\end{figure}

\subsubsection{Monte Carlo Validation of the Selection Factor}
For each SCNR $\gamma_s$, we generate $N_{\rm MC}$ independent Swerling-I coefficients
$\widetilde{\alpha}_m\sim\mathcal{CN}(0,\gamma_s)$ and white noise $n_m\sim\mathcal{CN}(0,1)$, and form
$r_m=\widetilde{\alpha}_m+n_m$. The detector accepts frame $m$ when $D_m=\mathbf 1\{|r_m|^2>\delta\}=1$, where
$\delta=-\ln\Pi_{FA}$. The number of accepted frames is $N_{\mathcal D}=\sum_{m=1}^{N_{\rm MC}}D_m$. The empirical selection factor is therefore
\begin{equation}
\widehat h(\gamma_s)
=
\frac{1}{\gamma_s N_{\mathcal D}}
\sum_{m=1}^{N_{\rm MC}}
|\widetilde{\alpha}_m|^2D_m.
\label{eq:empirical-selection-factor}
\end{equation}
Fig.~\ref{fig:selection-factor} compares the empirical estimate with $h(\gamma_s)=1+\delta\gamma_s/(1+\gamma_s)^2$, where $\delta=-\ln\Pi_{FA}$ is set using $\Pi_{FA}=10^{-4}$. The close agreement confirms the predicted target-energy enrichment in detector-accepted frames and the resulting increase in detection-gated range Fisher information.

\subsection{DTE Rate--CRB Performance Implications and Resource-Feasibility Consequences}
\label{sec:conditional-rate-implication}

Section~\textbf{\ref{sec:db-da}} used a constant nominal-CRB contour to characterize the DB and DA operating points. We now apply the same resource-composition perspective to DTE, with estimation accuracy measured by CRB$_d^\mathcal{D}$.

For a prescribed conditional CRB $\overline{\CRB}_d^{\mathcal D}$, define $\kappa_{\mathcal D}=3c^2/(8\pi^2G_s\overline{\CRB}_d^{\mathcal D})$.
For this single-target setting, let $\delta=-\ln\Pi_{FA}$ and $h(\gamma)=1+\delta\gamma/(1+\gamma)^2$.
Since $\CRB_d^{\mathcal D}\propto
[Ph(G_sP)B^2]^{-1}$, the DTE constant-CRB contour is
\begin{align}
\mathcal C_{\mathcal D}(\kappa_{\mathcal D})
=
\left\{
(P,B)\in\mathcal B:
Ph(G_sP)B^2=\kappa_{\mathcal D}
\right\},
\label{eq:dte-constant-crb-contour}
\end{align}
Let $\zeta(P)=Ph(G_sP)$. Since $\zeta'(P)=1+2\delta G_sP/(1+G_sP)^3>0$, $\zeta(P)$ is strictly increasing and $B_{\mathcal D}(P)=\sqrt{{\kappa_{\mathcal D}}/{Ph(G_sP)}}$ is strictly decreasing.

Because the detection-conditioned CRB characterizes the accuracy of accepted measurements but not their availability, DTE also requires $P\geq P_{\rm th}$. Define the detection-feasible portion of the DTE constant-CRB contour as $\mathcal A_{\mathcal D}(\kappa_{\mathcal D})
\defn
\mathcal C_{\mathcal D}(\kappa_{\mathcal D})
\cap\{P\geq P_{\rm th}\}.$
Following the least-over-delivery convention of Section~\ref{sec:da}, define $\mathcal F_{\mathcal D}(\kappa_{\mathcal D})
\defn\{(P,B)\in\mathcal B:
Ph(G_sP)B^2\geq\kappa_{\mathcal D},\,
P\geq P_{\rm th}\}$ and select
\begin{equation}
\begin{aligned}
\kappa_{\rm DTE}^{\star}(\kappa_{\mathcal D})
&=
\min_{(P,B)\in\mathcal F_{\mathcal D}(\kappa_{\mathcal D})}
Ph(G_sP)B^2,\\
R_{\rm DTE}^{\star}
&=
\max_{\substack{
(P,B)\in\mathcal F_{\mathcal D}(\kappa_{\mathcal D})\\
Ph(G_sP)B^2=
\kappa_{\rm DTE}^{\star}(\kappa_{\mathcal D})}}
R(P,B).
\end{aligned}
\label{eq:dte-rate-optimization}
\end{equation}
Here, $\kappa_{\rm DTE}^{\star}(\kappa_{\mathcal D}) =\kappa_{\mathcal D}$ when the equality contour is feasible, whereas $\kappa_{\rm DTE}^{\star}(\kappa_{\mathcal D}) >\kappa_{\mathcal D}$ in the over-delivery regime. In the latter case, the CRB-controlling product is $Ph(G_sP)B^2$, rather than $PB^2$ as in DA. The least-over-delivery convention of Section~\ref{sec:da} selects $(P_{\rm DTE}^{\star},B_{\rm DTE}^{\star}) =(P_{\rm th},B_{\min})$. The dependence of $h(G_sP)$ on power prevents a closed-form solution. Since $B=B_{\mathcal D}(P)$ is uniquely determined by each power value on $\mathcal C_{\mathcal D}(\kappa_{\mathcal D})$, \eqref{eq:dte-rate-optimization} is a bounded one-dimensional optimization over the feasible power values and is solved numerically.

\begin{figure}[t]
\centering
\includegraphics[width=\columnwidth]{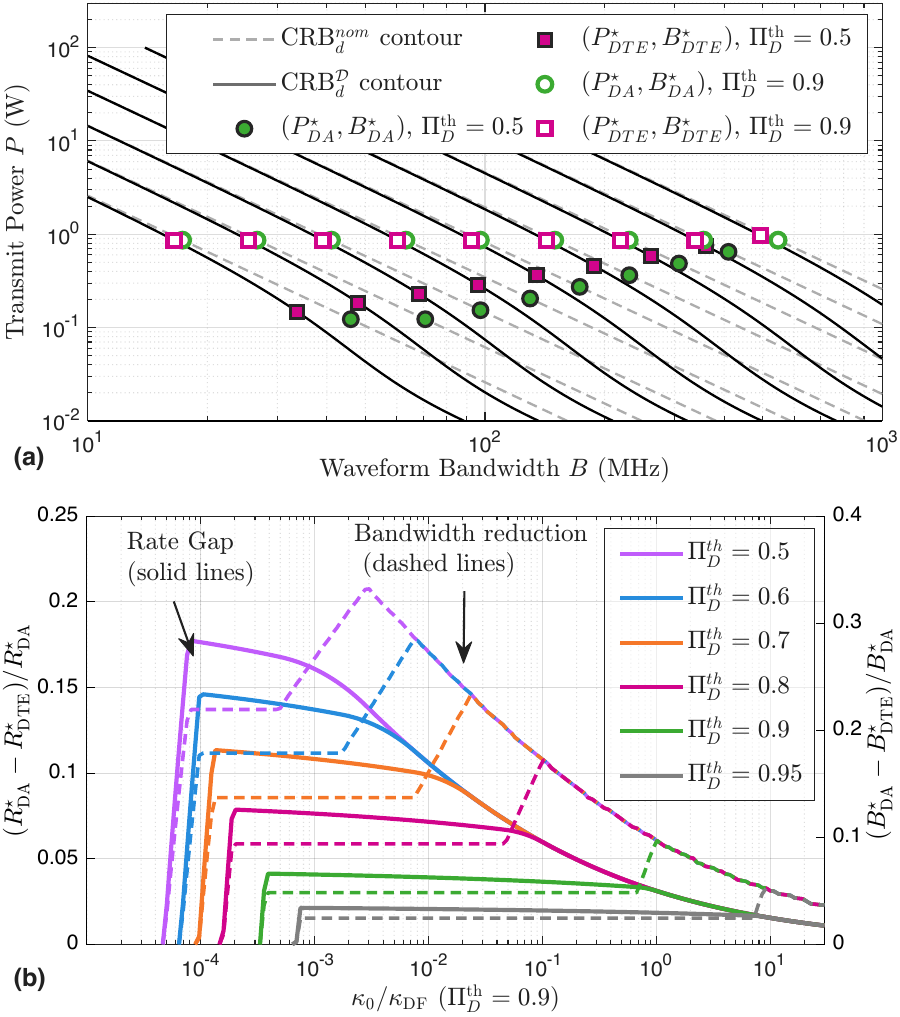}
\caption{Comparison of DA and DTE rate--CRB characterizations:
\textbf{(a)} nominal and detection-conditioned constant-CRB contours with their detection-constrained rate-maximizing operating points;
\textbf{(b)} relative rate difference (left y-axis) and relative bandwidth reduction (right y-axis) for the same numerical accuracy target, applied to detector-accepted measurements in DTE, versus prescribed CRB, for different detection probability requirements.}
\label{fig:dte-rate-comparison}
\end{figure}

To compare DA and DTE, we impose the same numerical accuracy target, interpreted as $\CRB_d^{\rm nom}$ in DA and $\CRB_d^{\mathcal D}$ in
DTE, and denote the associated  parameter by
$\kappa_0$. It defines $PB^2=\kappa_0$ for DA and
$Ph(G_sP)B^2=\kappa_0$ for DTE, giving
$B_{\mathcal D}(P)=B_{\rm DA}(P)/\sqrt{h(G_sP)}
\leq B_{\rm DA}(P)$. Fig.~\ref{fig:dte-rate-comparison} shows zero rate difference and bandwidth reduction for
$\kappa_0\leq\kappa_1\defn P_{\rm th}B_{\min}^2$. Both differences increase for $\kappa_1<\kappa_0<\kappa_2\defn
P_{\rm th}h(G_sP_{\rm th})B_{\min}^2$, as DA reaches its equality contour before DTE, and decrease beyond their peaks as the required SCNR increases and $h(G_sP)\rightarrow1$. Thus, DTE can require less bandwidth to attain the prescribed accuracy on detector-accepted estimates, without implying better all-frame sensing accuracy, since missed frames provide no estimate and availability remains governed by $\Pi_D$.

\subsection{Tracking with Intermittent Measurements}
\label{sec:tracking-pd}
The CRB$_d^{\mathcal D}$ bounds the variance of delivered range measurements. A downstream tracker also depends on which frames deliver those measurements. Existing ISAC tracking formulations evaluate the PCRB under the implicit assumption that a measurement is available in every frame~\cite{dong2022sensing,wang2024resource,yuan2021bayesian}.

For $\Pi_D<1$, let $\ell$ denote the frame index and consider a constant-velocity model with state
$\boldsymbol\xi_\ell=[d_\ell,\dot d_\ell]^T$, where $d_\ell$ and $\dot d_\ell$ are the range and velocity, respectively:
\begin{equation}
\boldsymbol\xi_\ell
=
\mathbf F\boldsymbol\xi_{\ell-1}
+\boldsymbol\omega_{\ell-1},
\quad
\mathbf F=
\begin{bmatrix}
1&T_f\\
0&1
\end{bmatrix},
\quad
\mathbf G=
\begin{bmatrix}
T_f^2/2\\
T_f
\end{bmatrix}.
\label{eq:tracking-model}
\end{equation}
Here, $\boldsymbol\omega_\ell\sim\mathcal N(\mathbf0,\mathbf Q)$ is the acceleration-induced process noise, with $\mathbf Q=\sigma_a^2\mathbf G\mathbf G^T$ and $\sigma_a^2$ denoting the acceleration variance.

For a controlled tracking benchmark, each detector-accepted range estimate is modeled as locally unbiased with Gaussian error variance equal to the prescribed CRB$_d^{\mathcal D}$. Thus, a detected frame provides
$y_\ell=\mathbf H\boldsymbol\xi_\ell+v_\ell$, where
$\mathbf H=[1\;0]$ and $v_\ell\sim\mathcal N(0,R_d)$, with
$R_d=\overline{\CRB}_d^{\mathcal D}$.

Assuming independent reflection and noise realizations across frames, $D_\ell\sim\operatorname{Bernoulli}(\Pi_D)$ denotes measurement availability. Following the intermittent-observation Kalman-filter model in~\cite{sinopoli2004kalman}, let $\mathbf P_\ell$ and $\mathbf S_\ell$ denote the posterior and predicted covariances, respectively. Their random Riccati recursion is given as:
\begin{align}
\mathbf S_\ell
&=
\mathbf F\mathbf P_{\ell-1}\mathbf F^T+\mathbf Q,
\notag\\
\mathbf P_\ell
&=
\mathbf S_\ell
-
D_\ell\mathbf S_\ell\mathbf H^T
\left(
\mathbf H\mathbf S_\ell\mathbf H^T+R_d
\right)^{-1}
\mathbf H\mathbf S_\ell .
\label{eq:random-riccati}
\end{align}
The Riccati recursion is the frame-by-frame update of the state-estimation error covariance, alternating between uncertainty growth during prediction and uncertainty reduction when a measurement is available.
For this linear-Gaussian model, $\mathbf P_\ell$ equals the conditional PCRB for the realized measurement-availability sequence \cite{schenato2007foundations}.
We evaluate the sequence-averaged steady-state PCRB:
\begin{equation}
\overline{\mathbf P}_\infty(\Pi_D)
=
\lim_{\ell\rightarrow\infty}
\mathbb E[\mathbf P_\ell].
\label{eq:tracking-steady-pcrb}
\end{equation}
One delivered measurement contributes range information $1/R_d$;
therefore, its expected per-frame contribution is $\Pi_D/R_d$
\cite{niu2001pcrlb}.
The same Bernoulli process produces prediction-only gaps. If $N$ denotes the number of consecutive missed frames before the next detected frame, then $\Pr\{N=n\}=(1-\Pi_D)^n\Pi_D$. During a gap of $n$ missed
frames,
\begin{equation}
\mathbf P_{\ell+n|\ell}
=
\mathbf F^n\mathbf P_{\ell|\ell}(\mathbf F^n)^T
+
\sum_{i=0}^{n-1}
\mathbf F^i\mathbf Q(\mathbf F^i)^T.
\label{eq:prediction-gap-covariance}
\end{equation}
For the constant-velocity model, its range component is
\begin{align}
[\mathbf P_{\ell+n|\ell}]_{1,1}
={}&
[\mathbf P_{\ell|\ell}]_{1,1}
+
2nT_f[\mathbf P_{\ell|\ell}]_{1,2}
+
n^2T_f^2[\mathbf P_{\ell|\ell}]_{2,2}
\notag\\
&+
\frac{\sigma_a^2T_f^4}{12}n(4n^2-1).
\label{eq:cv-gap-range}
\end{align}
The derivation of~\eqref{eq:cv-gap-range} is given in
Appendix~\textbf{\ref{app:cv-gap-derivation}}. Thus, decreasing $\Pi_D$ both reduces the average measurement-information arrival rate and increases the probability of long prediction-only gaps
\cite{schenato2007foundations}.
\begin{figure}[t]
\centering
\includegraphics[width=\columnwidth]{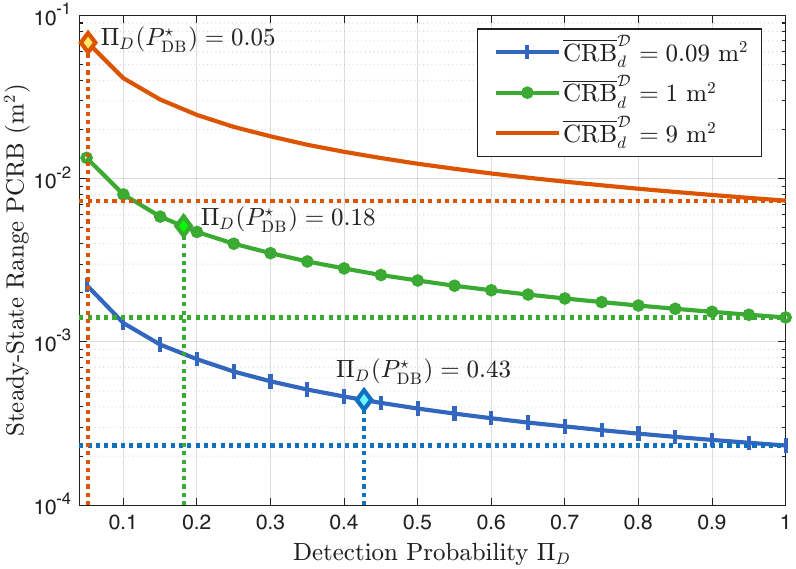}
\caption{Steady-state range PCRB of a constant-velocity Kalman tracker with acceleration uncertainty $\sigma_a=1$~m/s$^2$ and Bernoulli-intermittent measurements at frame duration $T_f=1$~ms.}
\label{fig:tracking-pd}
\end{figure}
Fig.~\ref{fig:tracking-pd} shows the steady-state PCRB obtained by Monte Carlo evaluation of~\eqref{eq:random-riccati}. A lower prescribed $\overline{\CRB}_d^{\mathcal D}$ reduces the tracking error at every $\Pi_D$, while decreasing $\Pi_D$ increases the PCRB. The DB markers show the tracking PCRB at the detection probabilities induced by CRB-only allocations; their separation from the $\Pi_D=1$ endpoint quantifies the degradation hidden by assuming that every frame provides a measurement. Thus, for a DTE receiver, tracking performance is jointly determined by the conditional CRB, which specifies the accuracy of accepted measurements, and $\Pi_D$, which determines the frequency and spacing of measurement updates.
\subsection{Relative Trends of the DB, DA, and DTE Formulations and Insights on Generality}
\label{sec:DA-approximates-DTE}
Under limited resource budgets, the system may operate with a lower detection-probability requirement. In this case, $h(\gamma_s)$ can differ appreciably from one. For a gated receiver, DTE can require less bandwidth than DA to meet the same numerical accuracy target on detector-accepted measurements. As $\bar\Pi_D^{\rm req}$ approaches one, the required SCNR increases and $h(\gamma_{s})\to1$, causing the DA and DTE optima to converge. A stricter CRB constraint similarly drives the DB rate-optimal allocation toward higher sensing SCNR, making the detection constraint less
consequential. Thus, \emph{relaxing the CRB constraint enlarges the DB--DA gap}, while \emph {lowering the $\Pi_D^{\rm th}$ requirement makes the DA--DTE gap more pronounced}.

For the common estimation accuracy constraint $\kappa_0$, let $\mathcal F_X(\kappa_0)$ denote the feasible set of resource pairs $(P,B)$ under formulation $X\in\{\mathrm{DB},\mathrm{DA},\mathrm{DTE}\}$.
Since $B_{\mathcal D}(P)=B_{\rm DA}(P)/\sqrt{h(G_sP)}
\leq B_{\rm DA}(P)$, $\mathcal F_{\rm DA}\subseteq\mathcal F_{\rm DB}$ and $\mathcal F_{\rm DA}\subseteq\mathcal F_{\rm DTE}$. Therefore, DA is the most conservative formulation in terms of resource feasibility.

The closed-form threshold in Section \ref{sec:db-da} and the DTE selection factor derived above follow from the adopted rectangular-spectrum, Swerling-I, and white post-suppression noise-plus-clutter models. The underlying conclusion is more general: whenever detection and estimation depend differently on the allocated resources, satisfying an estimation-only metric requirement like the CRB does not ensure reliable detection.
\section{Multi-Target and Multi-User Resource Allocation}
\label{sec:resource-allocation}
Sections~\ref{sec:db-da} and~\ref{sec:dte} analyzed the three formulations, DB, DA, and DTE, in a single-user, single-target setting. We now extend the analysis to a multi-user, multi-target deployment in which one ISAC BS simultaneously serves multiple communication users and sensing targets under joint power and bandwidth budgets, as described in Section \ref{sec:system}. For DB, DA, and DTE, we maximize the minimum per-user achievable rate subject to the corresponding sensing constraints in shared and orthogonal resource deployments. We establish the convexity of the resulting problems, describe their solution methods, and present the numerical results.

\subsection{Problem Formulation}

We adopt a max--min achievable rate objective to ensure fairness among users, subject to sensing service constraints corresponding to DB, DA, and DTE formulations, under shared power and bandwidth budgets. Consider the user and target index sets, resource variables, and performance metrics defined in Section~\ref{sec:system}. For the sensing constraints, define:
\begin{equation}
J_{d,q}^{\rm nom} \defn \frac{1}{\CRB_{d,q}^{\rm nom}}, \quad
J_{d,q}^{\mathcal D} \defn \frac{1}{\CRB_{d,q}^{\mathcal D}}, \quad
\overline J_q \defn \frac{1}{\overline{\CRB}_{d,q}}.
\end{equation}
The three formulations then share the following common form:

\begin{equation}
(\mathcal P^{(v)}):\quad
\begin{aligned}
\max_{\mathbf p,\mathbf b}\quad
&\min_{u\in\mathcal I_c}\; R_u(p_u,b_u)\\
\mathrm{s.t.}\quad
&\mathrm C_q^{(v)},\quad q\in\mathcal I_s,\\
&\mathbf 1^\top\mathbf p\le P_t,\quad
 \mathbf 1^\top\mathbf b\le B_t,\quad
 \mathbf p,\mathbf b\succeq\mathbf 0,
\end{aligned}
\label{eq:ra-rate-native}
\end{equation}
where $v\in\{\mathrm{DB},\mathrm{DA},\mathrm{DTE}\}$. The sensing constraints for each variant are:
\begin{equation}
\begin{array}{ccl}
\mathrm C_q^{(\mathrm{DB})}
&:&J_{d,q}^{\rm nom}(p_q,b_q)\ge \overline J_q,\\[1mm]
\mathrm C_q^{(\mathrm{DA})}
&:&J_{d,q}^{\rm nom}(p_q,b_q)\ge \overline J_q,\ \ 
   \Pi_{D,q}(p_q)\ge \Pi_{D,q}^{\rm th},\\[1mm]
\mathrm C_q^{(\mathrm{DTE})}
&:&J_{d,q}^{\mathcal D}(p_q,b_q)\ge \overline J_q,\ \ 
   \Pi_{D,q}(p_q)\ge \Pi_{D,q}^{\rm th}.
\end{array}
\label{eq:ra-rate-service-variants}
\end{equation}
The DB variant here serves as a reference formulation and follows the
CRB-only sensing characterization commonly used in ISAC resource
allocation, e.g.,~\cite{dong2022sensing}.

\subsection{Convex Reformulation and Solution}
\label{sec:ra-convexity}

Under the closed forms of $J_{d,q}^{\rm nom}$, $J_{d,q}^{\mathcal D}$, and $\Pi_{D,q}$, normalize $\overline p_u = p_u/P_t$, $\overline b_u = b_u/B_t$ for each user and $\overline p_q = p_q/P_t$, $\overline b_q = b_q/B_t$ for each target, and define
\begin{equation}
\rho_u = \frac{G_{c,u} P_t}{B_t}, \quad
g_q = G_{s,q} P_t, \quad
\lambda_q = \frac{\kappa_q P_t B_t^{2}}{J_0}, \quad
\eta_q = \ln\frac{\overline J_q}{J_0},
\label{eq:ra-log-constants}
\end{equation}
where $J_0$ is a fixed reference with units of range Fisher information and $\delta_q = -\ln \Pi_{{\rm FA},q} > 0$. In the normalized variables, the achievable rate becomes:
\begin{equation}
R_u(\overline p_u, \overline b_u)
= B_t\, \overline b_u \log_2\!\left(1 + \frac{\rho_u \overline p_u}{\overline b_u}\right),
\label{eq:ra-normalized-rate}
\end{equation}
and, since $J_{d,q}^{\rm nom}$ is a monomial in $(p_q, b_q)$, the log-domain sensing information takes the following form:
\begin{align}
f_q^{\rm nom}(\overline p_q, \overline b_q)
&\defn \ln \frac{J_{d,q}^{\rm nom}}{J_0}
= \ln \lambda_q + \ln \overline p_q + 2\ln \overline b_q,
\label{eq:ra-log-nominal}\\
f_q^{\mathcal D}(\overline p_q, \overline b_q)
&\defn f_q^{\rm nom} + \ln h_q(g_q \overline p_q).
\label{eq:ra-log-detection-conditioned}
\end{align}
The detection requirement in DA and DTE is equivalent to the power floor:
\begin{equation}
\overline p_q \ge \overline p_{D,q}^{\rm th} \defn
\frac{1}{g_q}\!\left(\frac{\ln \Pi_{{\rm FA},q}}{\ln \Pi_{D,q}^{\rm th}} - 1\right), \quad
\Pi_{{\rm FA},q} < \Pi_{D,q}^{\rm th} < 1.
\label{eq:ra-detection-power-floor}
\end{equation}
Introducing the variable $t$ yields the following epigraph form:
\begin{equation}
(\widetilde{\mathrm P}_{\rm R}^{(v)}):\quad
\begin{aligned}
\max_{\overline{\mathbf p},\overline{\mathbf b},\,t}\quad & t\\
\mathrm{s.t.}\quad
& R_u(\overline p_u, \overline b_u)\ge t, && u\in\mathcal I_c,\\
& \widetilde{\mathrm C}_q^{(v)}, && q\in\mathcal I_s,\\
& \mathbf 1^\top\overline{\mathbf p}\le 1,\ \
  \mathbf 1^\top\overline{\mathbf b}\le 1,\ \
  \overline{\mathbf p},\overline{\mathbf b}\succeq\mathbf 0,
\end{aligned}
\label{eq:ra-rate-log}
\end{equation}
with log-domain sensing constraints:
\begin{equation}
\begin{array}{ccl}
\widetilde{\mathrm C}_q^{(\mathrm{DB})}
&:& f_q^{\rm nom}\ge\eta_q,\\[1mm]
\widetilde{\mathrm C}_q^{(\mathrm{DA})}
&:& f_q^{\rm nom}\ge\eta_q,\ \ \overline p_q\ge\overline p_{D,q}^{\rm th},\\[1mm]
\widetilde{\mathrm C}_q^{(\mathrm{DTE})}
&:& f_q^{\mathcal D}\ge\eta_q,\ \ \overline p_q\ge\overline p_{D,q}^{\rm th}.
\end{array}
\label{eq:ra-log-service-variants}
\end{equation}

\subsubsection{Convexity and Solver}
The rate $R_u$ is jointly concave in $(\overline p_u,\overline b_u)$ because its Hessian is negative semidefinite. The nominal sensing information $f_q^{\rm nom}$ is the sum of $\ln\overline p_q$ and $2\ln\overline b_q$ and is therefore concave. Concavity of $f_q^{\mathcal D}$ reduces to that of $\psi_{\delta_q}(x)\defn\ln[x\,h_q(x)]$ at $x=g_q\overline p_q$; differentiation gives
\begin{equation}
\psi_{\delta_q}''(x) = -\frac{N_{\delta_q}(x)}{x^2(1+x)^2\left[(1+x)^2+\delta_q x\right]^2},
\label{eq:psi-second}
\end{equation}
with $N_{\delta_q}(x)=1+(6+2\delta_q)x+(15+12\delta_q+2\delta_q^2)x^2+(20+18\delta_q+4\delta_q^2)x^3+(15+8\delta_q)x^4+6x^5+x^6>0$ for $x,\delta_q>0$, so $\psi_{\delta_q}''(x)<0$ and $f_q^{\mathcal D}$ is jointly concave. The remaining constraints are affine, so $(\widetilde{\mathrm P}_{\rm R}^{(v)})$ is convex for all three variants.

DB and DA are solved using CVX in MATLAB~\cite{cvx}. Although the DTE problem is also convex, CVX does not recognize the factor $h_q(g_q\overline p_q)$ as a disciplined-convex-programming (DCP)-compliant expression. Therefore, we solve the DTE variant using the sequential quadratic programming (SQP) algorithm implemented by MATLAB \texttt{fmincon}. Since every local optimum of a convex problem is globally optimal, successful convergence yields the global DTE solution up to numerical tolerance ~\cite{mathworksFmincon,boyd2004convex}. In every simulated instance, \texttt{fmincon} terminated with both the first-order optimality measure and the maximum constraint violation below $10^{-9}$.

\subsection{Simulation Setup}
\label{sec:V-sim-setup}

We normalize the range-information requirements by the reference value $J_0 \defn \frac{8\pi^2G_{s,\mathrm{ref}}P_{\mathrm{ref}}B_{\mathrm{ref}}^2} {3c^2}= \frac{8\pi^2}{27}~\mathrm{m}^{-2} \approx 2.92~\mathrm{m}^{-2},$ obtained using $G_{s,\mathrm{ref}}=1~\mathrm{W}^{-1}$, $P_{\mathrm{ref}}=1~\mathrm{W}$, and $B_{\mathrm{ref}}=100~\mathrm{MHz}$. This value corresponds to a range accuracy of $1/\sqrt{J_0}\approx0.6~\mathrm{m}$.

We consider $M_Q=5$ targets with heterogeneous sensing gains\footnote{Using the radar equation $G_{s,q}=G_tG_r\lambda_c^{2}\sigma_{\mathrm{RCS},q}T_f/[(4\pi)^{3}d_q^{4}N_0L_s]$, with $f_c=30$~GHz, $G_t=G_r=20$~dBi, $\sigma_{\mathrm{RCS},q}=1~\mathrm{m}^{2}$,
$T_f=1$~ms, $N_0=-162$~dBm/Hz, and $L_s=6$~dB, the selected sensing gains correspond to target ranges of approximately $\{450,530,670,710,800\}~\mathrm{m}$ ~\cite{dong2022sensing,richards2014radar}.} 
$\mathbf G_s=[50,25,10,8,6]~\mathrm{W}^{-1}$ and normalized range Fisher information requirements $\overline{\mathbf J}=[1.00,0.50,0.15,0.10,0.06]$, where $\overline J_q$ specifies the required Fisher information as a multiple of $J_0$. Accordingly, target~$T_1$ has the strictest range-accuracy requirement, $\sqrt{\overline{\CRB}_{d,1}}\approx0.6~\mathrm{m}$, while target~$T_5$ has the least stringent requirement, $\sqrt{\overline{\CRB}_{d,5}}\approx2.4~\mathrm{m}$. The parameters are selected to expose different allocation regimes within the same deployment. Stronger, closer targets are assigned tighter range-information requirements, reflecting a safety-oriented application. The detector parameters are $\Pi_{{\rm FA},q}=10^{-5}$ and, unless otherwise stated, $\Pi_{D,q}^{\rm th}=0.5$ for every target. We consider two deployments:

$\bullet$ \emph{Shared deployment:} Five target-directed streams are transmitted. The streams for $T_1$, $T_2$, and $T_3$ also carry data for $M_U=3$ communication users located along the same beam directions, although possibly at different ranges. The streams for $T_4$ and $T_5$ are dedicated to sensing. We use $(P_t,B_t) =(10~\mathrm{W},70~\mathrm{MHz})$. 

$\bullet$ \emph{Orthogonal deployment:}  The $M_U=5$ communication streams and five sensing streams occupy disjoint subbands under the common power and bandwidth budgets. Here, we use $(P_t,B_t) =(20~\mathrm{W},200~\mathrm{MHz})$.

We use $\mathbf G_c=[345,244,154,138,119]~\mathrm{MHz/W}$, with the shared case using only the first three entries. These values correspond to a one-way communication link budget, whereas the sensing gains $\mathbf G_s$ account for two-way propagation.

The shared budget is deliberately tight so that the sensing constraints materially affect the allocations while all five targets remain serviceable. The larger orthogonal budget accommodates the ten disjoint streams. For Fig.~\ref{fig:rategap}(b), we additionally sweep
$P_t\in\{10,15,20\}~\mathrm{W}$ at $B_t=200~\mathrm{MHz}$ in the orthogonal deployment.
\begin{figure*}[t]
\centering
\includegraphics[width=\textwidth]{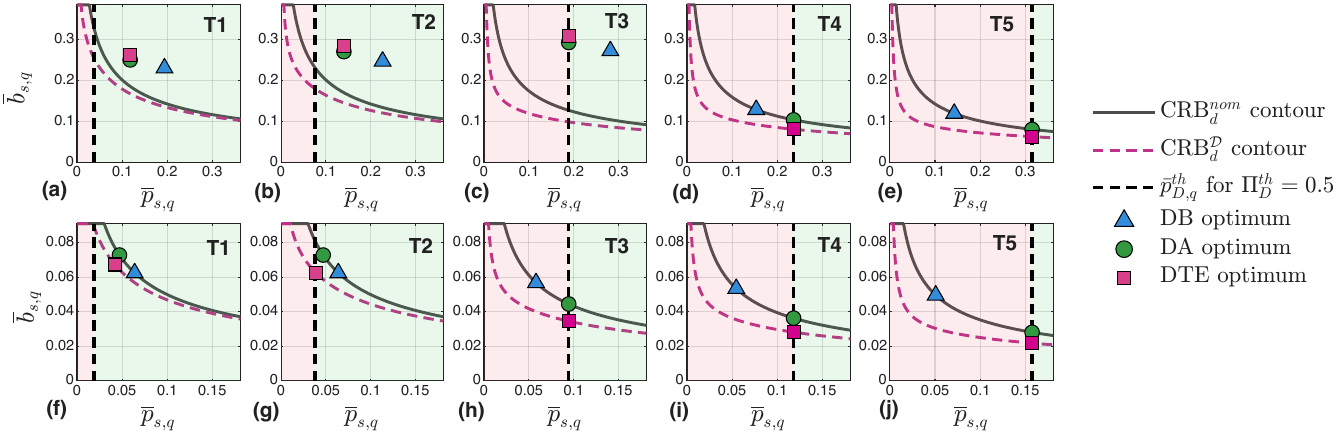}
\caption{Per-target sensing allocations for the heterogeneous five-target scenario. The top row, panels (a)--(e), shows the five-stream shared setting, in which $T_1$--$T_3$ streams also serve communication users, and the bottom row, panels (f)--(j), shows the allocations for the ten-stream orthogonal setting.}
\label{fig:allocations}
\label{fig:V1}
\end{figure*}

\begin{figure}
\centering
\includegraphics[width=\columnwidth]{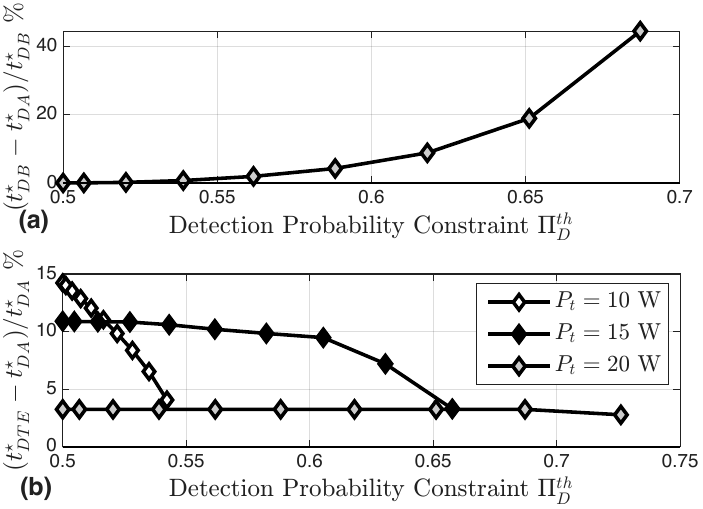}
\caption{Relative differences between the rate optima in the orthogonal
communication--sensing deployment versus the detection-probability
requirement: (a) DA versus DB; (b) DTE versus DA.}
\label{fig:rategap}
\end{figure}
\subsection{Numerical Results and Discussion}
Figs. \ref{fig:allocations} and \ref{fig:rategap} show the results for the DB, DA, and DTE rate--CRB optimization problems using the parameters specified above.
\subsubsection{Allocations and Operating Regimes}
Fig.~\ref{fig:allocations} shows the per-target sensing allocations in the $(P,B)$ planes. In each panel, the CRB$_d^{\rm nom}$ contour $\kappa_q\overline p_q\overline b_q^{2}=\overline J_q$ and the CRB$_d^{\mathcal D}$ contour $h_q(g_q\overline p_q)\kappa_q\overline p_q\overline b_q^{2}=\overline J_q$ are drawn together with the per-target detection floor $\overline p_{D,q}^{\rm th}$. The three variants place their optima at substantially different points.

\textit{Shared case:} Figs.~\ref{fig:allocations}(a)--(e) show the five targets split by strength. The three stronger targets $T_1,T_2,T_3$ are paired with the $M_U=3$ communication users; because $\overline p_{D,q}^{\rm th}$ is small for these targets, the CRB constraint alone admits detection-feasible optima. The rate objective assigns the paired streams additional power and bandwidth, placing their optima inside the sensing-feasible region rather than on the CRB boundary. In contrast, $T_4$ and $T_5$ use dedicated sensing streams, for which the CRB constraint binds and their allocations are realized along the corresponding CRB contours.

\textit{Orthogonal case:} Figs.~\ref{fig:allocations}(f)--(j) show that all five targets fall in the contour-sliding regime, which is a consequence of the orthogonal sensing-communication resource split. Therefore, the optimizer uses the smallest resource footprint that keeps each target both CRB- and detection-feasible. Consequently, no target is placed in an over-delivery regime, and $\overline p_q=\overline p_{D,q}^{\rm th}$ holds for every $q$.


\textit{Implication for rate--CRB formulations:} Across both settings, the DB rate--CRB optimum is detection-feasible for stronger targets ($T_1$, $T_2$, $T_3$) with tighter CRB requirements but infeasible for weaker targets ($T_4$, $T_5$) with looser requirements, confirming that a CRB constraint alone does not guarantee detection. The same structural limitation applies to rate--CRB formulations that omit an explicit detection constraint~\cite{liu2021optimaltx,dong2022sensing,liu2022joint,hua2024crbrate,ren2024fundamental,xiong2023fundamental,jointAngleDelay2023,li2024uplink,song2026crb}: their operating points may support detection under favorable conditions, but the formulation itself provides no such guarantee.

\subsubsection{Gap between Rate Optima: DA--DB and DTE--DA}
\textit{Shared case:} In the shared-resource simulations, communication streams are paired with the stronger, stricter-CRB targets, while weaker targets use dedicated sensing streams and exhibit contour sliding; hence, the shared-stream rate gap is negligible, consistent with the detection-feasible regime identified in Section~\ref{sec:db-da}. 

\textit{Orthogonal case:} Fig. \ref{fig:rategap} shows the rate gap results for the orthogonal communication-sensing resources. Each curve terminates at the maximum $\Pi_D^{\rm th}$ that can be supported for all targets under the corresponding power budget. Fig. \ref{fig:rategap}(a) shows the rate cost of enforcing detection in addition to CRB, i.e.,$(t^\star_{\rm DB} - t^\star_{\rm DA})/t^\star_{\rm DB}$ versus $\bar\Pi_D^{\rm th}$. The gap grows monotonically with $\bar\Pi_D^{\rm th}$ because a higher detection probability requirement forces the sensing streams to draw more power from the shared budget. Fig.~\ref{fig:rategap}(b) shows the DTE margin $(t^\star_{\rm DTE}-t^\star_{\rm DA})/t^\star_{\rm DA}$ for different power budgets. Each curve decreases with $\bar\Pi_D^{\rm th}$ because the selection factor $h(\gamma_s)\rightarrow 1$ as the sensing SCNR increases. When $h(\gamma_s)$ remains appreciably above one, DTE can meet the prescribed accuracy of detector-accepted measurements using less sensing bandwidth than DA, leaving more resources for communication.The curves do not reach zero because the weakest-target streams remain close to the minimum SCNR required for detection. Their SCNR is therefore not high enough for $h(\gamma_{s,q})$ to approach one, so the bandwidth reduction for accepted frames persists. Thus, all three curves approach a rate margin of approximately $3\%$.

Overall, the substantial DB--DA gap reflects the resource and rate cost of enforcing detection, whereas DTE provides a receiver-specific refinement of the architecture-neutral DA formulation, with its greatest
benefit under scarce resources. Although Sections~\ref{sec:db-da} and~\ref{sec:dte} use constant-CRB contours for analytical insight, the results confirm that the DB--DA effect is not created by the equality constraint; it persists under the conventional inequality constraints. Competition for shared resources places allocations on or near the corresponding contours, while additional resources move them deeper into the interior, where detection is more likely to be feasible.

\section{Conclusion}
This paper examined how CRB-based ISAC resource allocation supports the connected sensing operations of detection, estimation, and tracking. The single-link analysis showed that identical nominal range CRBs can arise from power--bandwidth allocations with substantially different detection probabilities. We derived when the DB rate--CRB optimum is detection-feasible and characterized how enforcing detection changes the allocation and communication rate. For a DTE receiver, we derived the detection-gated CRB of accepted frames and showed that their accuracy and availability jointly determine tracking performance. The multi-user, multi-target results further demonstrated that detection awareness can materially change resource distribution. These results motivate treating sensing as a hierarchical inference problem in which each metric is interpreted through the receiver operations that make its output available and meaningful, rather than
describing sensing by a single metric.
\label{sec:conclusion}


\appendices

\section{Complete Box-Constrained DB Solution}
\label{app:db-complete}

On $\mathcal C(\kappa)$, substituting $P=\kappa/B^2$ reduces the DB problem to~\eqref{eq:rate-contour}. The feasible bandwidth interval
is $B_{\rm L}(\kappa)\leq B\leq B_{\max}$, where
$B_{\rm L}(\kappa)\defn
\max\{B_{\min},\sqrt{\kappa/P_{\max}}\}$. The contour is nonempty if
and only if $\kappa\leq P_{\max}B_{\max}^2$.

Let $B_0(\kappa)\defn(G_c\kappa/u^\star)^{1/3}$. The complete box-constrained DB solution is given as:
\begin{equation}
B_{\rm DB}^{\star}(\kappa)
=
\begin{cases}
B_{\rm L}(\kappa),
&B_0(\kappa)<B_{\rm L}(\kappa),\\[1mm]
B_0(\kappa),
&B_{\rm L}(\kappa)\leq B_0(\kappa)\leq B_{\max},\\[1mm]
B_{\max},
&B_0(\kappa)>B_{\max},
\end{cases}
\label{eq:db-complete}
\end{equation}
with $P_{\rm DB}^{\star}(\kappa)
=\kappa/[B_{\rm DB}^{\star}(\kappa)]^2$.

\section{Complete Detection-Aware Feasibility Characterization}
\label{app:da-complete}
Recall that $\mathcal A(\kappa)=\mathcal C(\kappa)\cap\{P\geq P_{\rm th}\}$. We have $\mathcal A(\kappa)\neq\emptyset$ if and only if $P_{\rm th}\leq P_{\max}$ and $P_{\rm th}B_{\min}^2 \leq \kappa \leq P_{\max}B_{\max}^2.$ Under the interior-DB assumption of Proposition~\ref{thm:critical-crb}, the complete DA operating point is
\begin{equation}
(P_{\rm DA}^{\star},B_{\rm DA}^{\star})
=
\begin{cases}
(P_{\rm DB}^{\star},B_{\rm DB}^{\star}),
&\kappa\geq\kappa_{\rm DF},\\[1mm]
(P_{\rm th},B_{\rm th}),
&P_{\rm th}B_{\min}^2\leq\kappa<\kappa_{\rm DF},\\[1mm]
(P_{\rm th},B_{\min}),
&\kappa<P_{\rm th}B_{\min}^2.
\end{cases}
\label{eq:da-complete}
\end{equation}
The last branch uses the least-over-delivery convention
in~\eqref{eq:DA-lexicographic}. Finally, $\mathcal F_{\rm DA}(\kappa)\neq\emptyset$ if $P_{\rm th}\leq P_{\max}$ and $\kappa\leq P_{\max}B_{\max}^2$.

\section{Gap between the DA and DB Rate Optima}
\label{app:rate-gap}
Part (a) follows from the first row of~\eqref{eq:da-regimes}.

For part (b), the DA point in the contour-sliding regime is
$(P_{\rm th},B_{\rm th})$, with $B_{\rm th}<B_{\rm DB}^{\star}$ on the same contour. Since $R(B;\kappa)$ is strictly increasing on
$[0,B_{\rm DB}^{\star}]$, $R(P_{\rm th},B_{\rm th})<R_{\rm DB}^{\star}$ and therefore $\Delta R^{\star}>0$.

For part (c), when $\kappa<\kappa_{B_{\min}}$, both operating points are pinned at $B=B_{\min}$. The DB and DA rates are $R(\kappa/B_{\min}^2,B_{\min})$ and $R(P_{\rm th},B_{\min})$, respectively. Since $P_{\rm th}>\kappa/B_{\min}^2$ and rate is strictly increasing in power at fixed bandwidth, $\Delta R^{\star}<0$.

For $\kappa_{B_{\min}}\leq\kappa<B_{\min}^2P_{\rm th}$, the DB point is interior while the DA point is $(P_{\rm th},B_{\min})$. The gap is continuous in $\kappa$ across the endpoints of this interval. At $\kappa=\kappa_{B_{\min}}$, the DB point is $(u^\star B_{\min}/G_c,B_{\min})$, which reduces to the previous case and gives $\Delta R^{\star}<0$. At $\kappa=B_{\min}^2P_{\rm th}$, $B_{\rm th}=B_{\min}$ and the DA point reduces to the contour-sliding solution, which gives $\Delta R^{\star}>0$. Therefore, by the intermediate value theorem, $\Delta R^{\star}$ vanishes at some $\kappa\in(\kappa_{B_{\min}},B_{\min}^2P_{\rm th})$.

\section{Conditional Target Energy on Detector-Accepted Frames}
\label{app:conditional-energy}
At the centered operating point $d=d_0$, the detector output is $r=\widetilde{\alpha}+n_0$, where $\widetilde{\alpha}\sim\mathcal{CN}(0,\gamma_s)$ and $n_0\sim\mathcal{CN}(0,1)$ are independent. Hence, $r\sim\mathcal{CN}(0,1+\gamma_s)$.

Since $\widetilde{\alpha}$ and $r$ are jointly complex Gaussian, the conditional distribution of $\widetilde{\alpha}$ given $r$ has mean
$\mathbb E[\widetilde{\alpha}\mid r] =\gamma_s r/(1+\gamma_s)$ and variance $\gamma_s/(1+\gamma_s)$. Therefore,
\begin{equation}
\mathbb E[
|\widetilde{\alpha}|^2\mid r]
=
\frac{\gamma_s}{1+\gamma_s}
+
\frac{\gamma_s^2}{(1+\gamma_s)^2}|r|^2.
\label{eq:app-conditional-energy-r}
\end{equation}
The detector-acceptance event is $\mathcal D=\{|r|^2>\delta\}$, where $\delta=-\ln\Pi_{FA}$. Since $|r|^2$ is exponentially distributed with mean $1+\gamma_s$, the memory-less property of the exponential distribution gives
$
\mathbb E[
|r|^2\mid\mathcal D]
=
\delta+1+\gamma_s.
$
Taking the conditional expectation of \eqref{eq:app-conditional-energy-r} over $\mathcal D$ yields
\begin{align}
\mathbb E[
|\widetilde{\alpha}|^2\mid\mathcal D]
&=
\frac{\gamma_s}{1+\gamma_s}
+
\frac{\gamma_s^2}
{(1+\gamma_s)^2}
(\delta+1+\gamma_s)
\notag\\
&=
\gamma_s
\underbrace{\left(
1+
\frac{\delta\gamma_s}
{(1+\gamma_s)^2}
\right)}_{\defn h(\gamma_s)}.
\label{eq:app-conditional-target-energy}
\end{align}
Substitution into the conditional range information in \eqref{eq:conditional-range-information} yields $J_d^{\mathcal D} =h(\gamma_s)J_d^{\rm nom}$. The factor $h(\gamma_s)$ reflects selection of higher-energy echo-bearing observations by the detector.

\section{Prediction-Gap Range Covariance}
\label{app:cv-gap-derivation}

For the constant-velocity model in~\eqref{eq:tracking-model}, $\mathbf F^n=[\begin{smallmatrix}1&nT_f\\0&1\end{smallmatrix}]$. Hence, the range component of the propagated posterior covariance is
\begin{align}
\left[\mathbf F^n\mathbf P_{\ell|\ell}
(\mathbf F^n)^T\right]_{1,1}
={}&[\mathbf P_{\ell|\ell}]_{1,1}
+2nT_f[\mathbf P_{\ell|\ell}]_{1,2}
\notag\\
&+n^2T_f^2[\mathbf P_{\ell|\ell}]_{2,2}.
\label{eq:appendix-propagated-covariance}
\end{align}
Since $\mathbf Q=\sigma_a^2\mathbf G\mathbf G^T$ and $\mathbf F^i\mathbf G=[T_f^2(i+1/2)\;\;T_f]^T$, the accumulated
range-process-noise variance is
\begin{align}
\left[\sum_{i=0}^{n-1}
\mathbf F^i\mathbf Q(\mathbf F^i)^T\right]_{1,1}
&=\sigma_a^2T_f^4\sum_{i=0}^{n-1}(i+1/2)^2
\notag\\
&=\frac{\sigma_a^2T_f^4}{12}n(4n^2-1).
\label{eq:appendix-accumulated-noise}
\end{align}
Combining the propagated posterior covariance and accumulated process-noise covariance in~\eqref{eq:prediction-gap-covariance}
gives~\eqref{eq:cv-gap-range}.

\bibliographystyle{IEEEtran}
\bibliography{references}

\end{document}